\documentclass[11pt]{article}

\usepackage{url}
\usepackage{bbm}
\usepackage{mathtools}
\usepackage{amssymb}
\usepackage{amsthm}
\usepackage{empheq}
\usepackage{latexsym}
\usepackage{eurosym}
\usepackage{dsfont}
\usepackage{appendix}
\usepackage{color} 
\usepackage[unicode]{hyperref}
\usepackage{frcursive}
\usepackage[utf8]{inputenc}
\usepackage[T1]{fontenc}
\usepackage{multirow}
\usepackage{todonotes}
\usepackage{lmodern}
\usepackage{anyfontsize}
\usepackage{pgfplots}
\usepackage{stmaryrd}
\usepackage{float}

\usepackage{graphicx}
\usepackage{caption}
\usepackage{subcaption}
\usepackage{enumerate}
\usepackage{breakcites}
\usepackage{booktabs}

\graphicspath{ {images/} }

\pgfplotsset{compat=newest}

\definecolor{red}{rgb}{0.7,0.15,0.15}
\definecolor{green}{rgb}{0,0.5,0}
\definecolor{blue}{rgb}{0,0,0.7}
\hypersetup{colorlinks, linkcolor={red},citecolor={green}, urlcolor={blue}}
            
\makeatletter \@addtoreset{equation}{section}

\newtheorem{theorem}{Theorem}

\newtheorem{theorem2}{Theorem}[section]

\newtheorem{proposition}[theorem2]{Proposition}

\newtheorem{remark}[theorem2]{Remark}
\newcommand{\comment}[1]{}
\setlength\parindent{0pt}
\usepackage{geometry}
\geometry{hmargin=1.72cm,vmargin=2.4cm}
\DeclareUnicodeCharacter{014D}{\=o}
\setcounter{secnumdepth}{3}



\linespread{1.5}


\DeclareMathOperator*{\esssup}{ess\,sup}

\title{Clearing time randomization and transaction fees\\
for auction market design}
\author{Thibaut {\sc Mastrolia}\footnote{UC Berkeley, Department of Industrial Engineering and Operations Research, thibaut.mastrolia@berkeley.edu} ~and~ Tianrui {\sc Xu}\footnote{UC Berkeley, Department of Mathematics, tianrui.xu@berkeley.edu}}

\date{\today}

\setlength{\marginparwidth}{2cm} 
\usepackage{todonotes} 
\begin{document}

\maketitle
	\begin{abstract}
	Flaws of a continuous limit order book mechanism raise the question of whether a continuous trading session and a periodic auction session would bring better efficiency. This paper wants to go further in designing a periodic auction when both a continuous market and a periodic auction market are available to traders. In a periodic auction, we discover that a strategic trader could take advantage of the accumulated information available along the auction duration by arriving at the latest moment before the auction closes, increasing the price impact on the market. Such price impact moves the clearing price away from the efficient price and may disturb the efficiency of a periodic auction market. We thus propose and quantify the effect of two remedies to mitigate these flaws: randomizing the auction's closing time and optimally designing a transaction fees policy for both the strategic traders and other market participants. Our results show that these policies encourage a strategic trader to send their orders earlier to enhance the efficiency of the auction market, illustrated by data extracted from Alphabet and Apple stocks.  \\ \\
		
		\textbf{\textit{Keywords}}: Microstructure, auction market design, market making, optimal stopping
	\end{abstract}
\noindent\rule{15cm}{0.4pt}

\section{Introduction}

\subsection{Periodic auction and continuous limit order book}

Continuous limit order book (CLOB for short or continuous double auction) and periodic auction (also called batch auction or call auction) are the two most commonly used electronic trading systems around the world. For example, New York Stock Exchange, NASDAQ (U.S.), London Stock Exchange all use the continuous limit order book system during normal trading hours and switch to a periodic auction system for the opening and closing auctions to determine the open price and the closing price of each trading day. CBOE Europe have both continuous limit order book and periodic auction open during the normal trading hours. CowSwap a cryptocurrency exchange uses periodic auction to settle orders. 

\bigskip

The trading mechanisms of the two systems are as follows. A continuous limit order book market executes incoming orders continuously, i.e., when there is a matching order between a market order and a limit order. Every matched order trades at a price that depends on its requested price and the price of the limit order it is matched to. In comparison, a periodic auction market executes incoming orders as a batch and applies a uniform price to all executed orders in this batch after a specific time horizon. To be more specific, a market order would initiate an auction that would be open for a specific time interval until a terminal time, which is called the clearing time. During the auction, the exchange receives orders from market participants. Market participants give the exchange a proposed price at which they are willing to buy or sell the asset and a specific volume. When the auction closes, the exchange determines a clearing price by setting it to maximize the number of fulfilled orders (or to minimize the imbalance). Every order may be executed at the clearing price instead of their proposed price. Due to this rule, limit buying orders with a proposed price below the clearing price and limit selling orders with a proposed price above the clearing price would not be executed. Despite the difference between the two trading systems, a continuous limit order book can be considered a periodic auction whose duration equals $0$ second, see for example \cite{Jusselin}.

\subsection{Comparison and main flaws of limit order book}

The literature to promote general market quality, discover better trading mechanisms, or improve market competition has been studied since the 60s; see \cite{garbade}. The continuous trading system has the advantage of providing ''immediate execution''. No one likes to wait; in \cite{kalay}, it is shown empirically that people prefer to trade in a continuous market instead of an auction market. However, such immediacy also creates a problem, especially after the emergence of high-frequency traders. \cite{Budish}, \cite{wah2013}, and \cite{fs2012} question the efficiency of limit order mechanism rather than periodic auction. They study the efficiency of periodic auctions to monitor high-frequency trading advantages and increase market efficiency. \cite{Budish} compares two highly correlated stocks from real data and finds that the continuous trading system creates arbitrage opportunities in small time intervals. These arbitrage opportunities could be caught by high-frequency traders and thus incite competition in speed rather than price. High-frequency trading has brought down the execution time from several seconds at the start of the 2000s to microseconds nowadays. \cite{wah2013} reaches a similar conclusion as they use simulations to show that high-frequency traders are latency arbitrageurs and widen the bid-ask spread. \cite{fs2012} discusses the negative impacts of high-frequency trading and proposes using periodic auctions (they also propose pro-rata rules with continuous market and randomized auction duration with periodic auctions).

\bigskip

Following these works, more studies focus on the advantages and disadvantages of continuous limit order books and periodic auctions. \cite{Aquilina} use exchange message data to quantify the speed competition in \cite{Budish}. However, restricting the competition in speed is only one of the characteristics of the periodic auction system compared to the continuous system. Recall that the other characteristic of a periodic auction differing from the CLOB is that the clearing price is set by combining the opinions of a batch of orders instead of just two. Such characteristics view market supply and demand more comprehensively and thus could improve the price discovery process. \cite{Jusselin} shows that optimally setting a clearing rule (price discovery and auction duration) for the periodic auction system enables the clearing price of most assets to be closer to the efficient prices compared to the continuous limit order book system. However, the continuous system could sometimes be optimal regarding the above mentioned price discovery process. \cite{energy} shows by using real data that if we replace the continuous German Electricity Market with a frequent batch auction, there will be less traded volume but better price discovery (price is less noisy and closer to the fundamental value) and less liquidity cost measured in round-trip (CRT) cost. There are, of course, different opinions. In \cite{zhangib}, they show empirically that sub-second frequent batch auction leads to a decline in adverse selection cost but an increase in relative spread and a decrease in information efficiency measured by ''autocorrelation of midpoint returns''.  One thing to note about all these works is that researchers use different assumptions, models, and measures to reach their conclusions, so seemingly contrasting conclusions do not necessarily imply a contradiction.

\bigskip

Echoing the conclusion of \cite{Jusselin}, ``One size does not fit all''. Neither periodic auction nor continuous limit order book is the best by all measures and neither would benefit all affiliated groups. The interest of this paper is not to compare periodic auction and CLOB, but to study the possibility of a co-existence of the two systems. \cite{Derchu} proposes an \textit{Ad Hoc Electronic Auction Design} (AHEAD) which allows traders to switch between continuous trading sessions and periodic trading sessions. They show that this design enables a less volatile clearing price and traders especially the smaller players benefit from this design compared to a continuous system or a periodic auction system.

\bigskip

In addition to the CLOB and the periodic auction market, there have been focuses and advances in other trading mechanisms. Dark pools differ from CLOB and auction in that orders are not displayed to the public; see \cite{Ye11}, \cite{Zhudark}, \cite{Ye24}, \cite{BMMR} for studies on whether a dark pool harm or help with market efficiency. See \cite{Melton} for a latency floor design on CLOB to limit high frequency trading. Note also that cryptocurrency trading markets have interesting mechanism designs as well; see \cite{crypto24} for combining batch auction with automated market maker. 

\bigskip
We also want to mention additional works as for example \cite{du2017optimal}, \cite{Brinkman}, \cite{fricke2018too}, and \cite{Jusselin}. Each of these articles study the design of periodic auction to answer the following question: What is the optimal auction duration? \cite{wah2016} uses simulation to show how fast and slow traders would choose between continuous market and periodic auction if the two markets run together; the model in this work does not consider strategic timing though. \cite{dusize} proposes to add a size discovery market (''workup'') along with a batch auction market to increase allocation efficiency. A size discovery market allows traders to exchange inventory at a fixed-price so traders need not to worry about their price impact. The main differences between their model and ours are that they assume a strategic player in a batch auction submits a demand function instead of an order price or quantity and they focus on balancing the inventory level of each strategic player as they assume that a equal distribution of inventory among traders is the most desired. Despite the difference, their proposed design is worth to consider. Relating to \cite{dusize}'s concern, \cite{goldberg} uses a Nash equilibrium model to show that strategic players in an auction lower demand and supply to avoid moving the clearing price away from their interests and such behavior could lead to a loss of trading volume in the market.

\subsection{Optimal policies to cure auction's inefficiencies and related works}

In this paper, we are interested in furthering the design of AHEAD, the concurrence of a continuous market and a periodic auction market, and we focus specifically on the periodic auction part. We want to see whether a strategic trader could take advantage of the current setting when an auction is open. The main question raising our motivation is the following. 

\begin{center}
Would a strategic player disturbs market efficiency by strategically picking their arrival time? 

If so, how could we improve the auction design to bring back efficiency? 
\end{center}
\bigskip

The efficiency measure we use is a price discovery measure, the difference between the actual clearing price of an asset and its theoretical efficient price. To get a larger picture, ``market efficiency'' usually refers to either ``external efficiency'' or ``internal efficiency'' (see \cite{west}). A market is externally efficient if its prices reflect all available information. Fama develops this definition in \cite{fama} (also see \cite{Malkiel}). Relevant criteria to measure the market quality include auto-correlation of return, delay measure, and Sharpe ratio (see for example \cite{griffin} and \cite{sharpe}). Such efficiency is called \textit{external} because it depends not on trading systems but on the outside world, such as how information is spread among traders. Internal efficiency refers to whether a market could enable traders to trade at prices close enough to their desired prices. The difference could be generated by the distance between the executed price and the true price (also called pricing error, price discovery measure, or price formation measure) and by transaction costs such as liquidation and bid-ask spread. We adopt the price discovery measure because it suits our purpose the most and because most relevant work uses such measure (see all relevant work cited in the introduction and see \cite{madhavan1992trading}).


\bigskip

The main results of our paper are as following. By studying the behavior of a strategic trader in a periodic auction, we emphasize that the trader benefit from arriving only at the end of an auction and right before the auction closes. Such strategic choice of arrival time enables the trader to take advantage of all known information to submit a strategic price and have the greatest price impact on the market. Not only is such strategic timing unfair to other market participants, but also the strategic pricing could drive the clearing price away from the efficient price of the underlying asset if the strategic trader has an incorrect guess about the efficient price. It can be seen as a disturbance to market efficiency.

\bigskip
We propose two regulatory policies to respond to this mechanical problem inherent in periodic auction markets. First, we introduce a randomization design of the auction's closing time. Second, we introduce a transaction fee indexed on the arrival time of every market participant in the auction, i.e., the later a trader arrives, the higher they pay. We prove that the randomization and the transaction fee design address the problem efficiently and bring better quality to the market. Note that the randomization of the closing time is mentioned in \cite{fs2012} and \cite{wah2016}. However, none of these works study the possible effects of such design and use quantification methods to address it. In reality, randomization of closing time is implemented, so it is possible to conduct an empirical analysis of such a design. London Stock Exchange adds a 30-second random period to the opening auction and closing auction. Cboe randomizes the whole duration of its periodic auction, i.e., an auction might close in 0 to 100 milliseconds. However, we caution that any conclusions on empirical analysis of real-world auction design might not apply to auction design because real-world auctions have specific settings, including auction duration and priority rules. Excluding or changing any of the settings could lead to a different conclusion.

\bigskip

\bigskip

Additionally, we would like to remind that our model of a periodic auction adopts the most basic set of rules. Periodic auction markets in reality could be much more complicated. Apart from Cboe's priority rule we mentioned above, the opening and closing auction of NYSE and Nasdaq each add their own matching and execution rules. These rules could possibly limit many advantages of a periodic auction market; see \cite{Jegadeesh} for an empirical analysis of NYSE and Nasdaq's closing auctions. 

\bigskip

The structure of this study is the following. In Section \ref{sec;auctionmod}, we introduce the auction mechanism and the modeling without fees or randomization. Section \ref{sec;auctionmod} presents the main charcateristics of the market, the mathematical model and the information set available for the strategic trader. Section \ref{sec:define} set the clearing price rule to trade the considered asset ensuring the largest number of matching orders (see Proposition \ref{prop:clearingp}). Section \ref{sec;stratopt} define the optimization problem of the strategic trader without regulation (randomization of the clearing time and transaction fees) together with its impact of the market quality. Section \ref{sec;data} introduce the data set together with the calibration of the relevant parameters for the study. Section \ref{sec;fullinfo} presents the solution of the problem when the strategic trader has a full information on the traded asset illustrated with numerical results. Section \ref{sec;inefficiency} studies the case where the strategic trader is imperfectly informed about the efficient price of the asset. Section \ref{sec;bilevel} turns to the impact of a clearing time randomization and transaction fees policy on the strategic trader behavior, the price impact and the market quality. The bilevel optimization is first introduced, following by the result considering only a randomization of the clearing time without fees then adding the fees. We consider two different problems from the exchange perspective to improve the market quality: either to reduce the price impact including fees paid by traders (Section \ref{sec;mq1}) or to reduce the distance between the clearing price and the efficient price while benefiting from the fees structure (Section \ref{sec:qualityfees}). Section \ref{sec;ccl} concludes the study and provides future perspective.



\section{Auctions market modeling with transaction fees and randomization}\label{sec;auctionmod}

\subsection{The market characteristics}\label{sec:define}
We consider an auction to trade a risky asset starting at time $0$ with duration $T>0$. We denote by $P^{cl}_T$ the clearing price of the auction determined by the exchange to maximize the number of trades at the clearing time $T$. During the auction's duration, limit orders arrived such that each limit order $i$ is characterized by a limit price $P_i$ at which a trader is willing to buy or sell the asset and a volume determined by a supply function $Q_i = K(P^{cl}_T - P_i)$. The parameter $K>0$ is the slope of the supply function assumed to be fixed for each limit order. Note that linear supply functions are also considered in \cite{du2017optimal,fricke2018too,Jusselin} We assume that $(P_i)_{i\geq 1}$ is a family of independent normally distributed random variables with mean $\mu^{mm}$ and standard deviation $\sigma^{mm}$. Note that if $Q_i \leq 0$, the order is a buying order; if $Q_i \geq 0$, the order is a selling order. We assume that the efficient price of the risky asset denoted by $P^*$ is a normal random variable with mean $\mu^{*}$ and standard deviation $\sigma^*$. We assume $\mu^{mm} = \mu^{*}$ and $\sigma^{mm} = \sigma^* = \sigma$, for some $\sigma>0$. We model the arrival of these limit orders by a Poisson process $M$ with intensity $\lambda^{mm}_t=\lambda \times t$ where $\lambda$ is a positive constant. In other words, $N_t : = M_t + 1$ denotes the number of market makers active in the auction up to time $t$.\footnote{Here, "+1" practically means that there is at least one trader in the auction market for it to be open and theoretically to avoid division by zero. This assumption is consistent with the existence of liquidity in a CLOB transfered to an auction at for example the end of the day, see \cite{Jusselin}.} 

We define a family of $\sigma$-algebras $\mathcal{F} = \{\mathcal{F}_t\}_{0 \leq t \leq T}$ generated by the available information up to time $t$ and composed with the number of limit orders arrived in the auction and their limit prices $P_i$ that is $\mathcal{F}_t := \sigma\{N_t,(P_i)_{i=1}^{N_t}\}$.\vspace{0.5em}

We consider a strategic seller joining the auction at a deterministic time $\tau$ between 0 and $T$ aiming at optimally liquidating her position in the risky asset.\footnote{We could similarly assume that the strategic trader is a buyer but only consider the seller case motivated by optimal liquidation problem for the sake of simplicity.} The strategic seller controls the exact time $\tau$ she arrives in the auction together with the price $P$ at which she is willing to sell the asset, and the direction of the trade (which is selling). In this case, the volume sends by this strategic seller is $Q = K(P^{cl}_T - P)$. We assume that the price $P=P_\tau^\mu$, sent at time $\tau$ in the auction, is a normal distribution with mean $\mu$ where $\mu$ is a random variable controlled by the seller and variance $\sigma^2$ fixed measurable with respect to the information available for the trader. Note that while the variance for the market makers and the strategic seller are the same for the sake of simplicity, the strategic trader knows up to time $t$ the number of arrivals $N_t$ and the prices of these orders $\{P_i\}_{i=1}^{N_t}$. In particular, the strategic trader may be imperfectly informed about the efficient price of the asset since $\mu$ is determined through $\mathcal F$ which does not take into account the efficient price. The seller uses the information available at time $t$ to determine $\mu_t$, so $\mu$ could be viewed as a function $\mu_t = \mu(t, N_t,\{P_i\}_{i=1}^{N_t})$. By denoting $P^\mu_t$ the price the seller proposes entering in the auction at time $t$, we assume that $P_t$ is a normal distribution $\mathcal{N}(\mu(t, N_t,\{P_i\}_{i=1}^{N_t}),\sigma^2)$ where the function $\mu$ is controlled by the strategic seller when she enters at time $t$ and sees $N_t$ arrivals with associated price $\{P_i\}_{i=1}^{N_t}$.\vspace{0.5em}

Since the strategic seller controls the direction she trades, her order would not be executed if $P^{cl}_T < P^\mu_t$. However, since all other market makers do not control the directions of their trades, their orders will be executed for sure. The quantity the strategic seller trades is thus given by $K(P^{cl}_T - P^\mu_t)$ if $P^{cl}_T > P^\mu_t$, and $0$ otherwise. \vspace{0.5em}

The strategic trader does not necessarily know exactly the mean $\mu^*$ of the efficient price and the mean $\mu^{mm}$ of other transferred limit orders. We denote by $\mu_{g}^{*}$ and $\mu_{g}^{mm}$ the estimations the strategic trader of $\mu^*$ and $\mu^{mm}$ respectively. We denote by $\mathbb{E}_{g}$ the expectation when the mean of $P^*$ and $P_i$ corresponds to these estimations and we denote by $\mathbb{E}$ the expectation in the case $\mu_{g}=\mu^*$ and $\mu_{g}^{mm}=\mu^{mm}$.

\subsection{Clearing Price rule}

The clearing price of an auction is set to maximize trading volumes. When there is no strategic trader, corresponding to the case when no traders control the direction of the proposed prices, this clearing price is the equilibrium between supply and demand, see \cite{du2017optimal} or \cite[Section 2.3]{Jusselin}. That is 


\begin{equation}
    P^{cl}_T =  \frac{\sum_{i=1}^{N_T} P_i } {N_T},
    \label{Eqcl}
\end{equation}
which sets the clearing price to eliminate any imbalance between buy and sell orders.\vspace{0.5em}

Assume now that the strategic seller sends a price $P^\mu_t$ at time $t$ to trade the asset at the clearing time $T$ and will active only if the clearing price is above $P^\mu_t$, we have the following result.


\begin{proposition}\label{prop:clearingp}
The clearing price of the auction is determined by

\begin{equation}
P^{cl}_T =
\left\{ \begin{aligned} 
    \frac{\sum_{i=1}^{N_T} P_i + P^\mu_t } {N_T  +1} &\qquad \text{ if } \frac{\sum_{i=1}^{N_T} P_i}{N_T}  > P^\mu_t\\
     \\
    \frac{\sum_{i=1}^{N_T} P_i } {N_T} &\qquad \text{ otherwise.}
\end{aligned} \right.
\label{eq:clear}
\end{equation}

\end{proposition}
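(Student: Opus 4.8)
The plan is to build on the zero-imbalance characterization of the clearing price already recorded in \eqref{Eqcl}: the clearing price is the unique price at which the aggregate signed volume of all participating orders vanishes (recall that with the linear supply functions $Q_i = K(P^{cl}_T - P_i)$, setting the net imbalance to zero is exactly what selects $\bar P$ in the benchmark case). The only new ingredient compared to \eqref{Eqcl} is that the strategic seller's order is \emph{one-sided}: she participates only when $P^{cl}_T > P^\mu_t$, so the set of active orders depends on the clearing price one is solving for. This makes the clearing condition self-referential, and the heart of the argument is to resolve this self-reference and show it collapses to the observable comparison between $P^\mu_t$ and the market-maker average. Throughout I write $\bar P := \frac{1}{N_T}\sum_{i=1}^{N_T} P_i$ for the benchmark price of \eqref{Eqcl}, and I treat the realized values $P_1,\dots,P_{N_T}$ and $P^\mu_t$ as fixed numbers, so that the statement is proved pathwise.

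First I would write down the two candidate clearing prices, one for each activation regime. If the seller's order participates, the zero-imbalance equation reads
\begin{equation*}
\sum_{i=1}^{N_T} K\big(P^{cl}_T - P_i\big) + K\big(P^{cl}_T - P^\mu_t\big) = 0,
\end{equation*}
whose unique solution is $p_{\mathrm{on}} := \frac{\sum_{i=1}^{N_T} P_i + P^\mu_t}{N_T + 1}$. If instead her order does not participate, only the market makers enter the balance and the solution is $p_{\mathrm{off}} := \bar P$. These are precisely the two branches of \eqref{eq:clear}; what remains is to decide which one is actually selected in each configuration.

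The key observation is that $p_{\mathrm{on}}$ is a strict convex combination of $\bar P$ and $P^\mu_t$, namely $p_{\mathrm{on}} = \frac{N_T}{N_T+1}\,\bar P + \frac{1}{N_T+1}\,P^\mu_t$, so $p_{\mathrm{on}}$ always lies between $\bar P$ and $P^\mu_t$, strictly so unless they coincide. I would then test each branch against its own activation hypothesis. The ``on'' branch requires $P^{cl}_T = p_{\mathrm{on}} > P^\mu_t$; since $p_{\mathrm{on}}$ sits between $\bar P$ and $P^\mu_t$, this holds exactly when $\bar P > P^\mu_t$. The ``off'' branch requires $P^{cl}_T = p_{\mathrm{off}} = \bar P \le P^\mu_t$, i.e. exactly when $\bar P \le P^\mu_t$. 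Hence exactly one branch is consistent in every configuration, the two regimes partition all cases according to the sign of $\bar P - P^\mu_t$, and substituting $\bar P = \frac{1}{N_T}\sum_{i=1}^{N_T} P_i$ gives \eqref{eq:clear} (the equality case $\bar P = P^\mu_t$ causes no ambiguity, since then $p_{\mathrm{on}} = p_{\mathrm{off}} = \bar P$).

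The only genuine subtlety I anticipate is this final consistency/well-posedness step: because the participation set depends on the very price being solved for, one must justify replacing the case split on the unknown $P^{cl}_T$ versus $P^\mu_t$ by the observable split on $\bar P$ versus $P^\mu_t$, and rule out that both candidates simultaneously satisfy (or both violate) their activation constraints. The convex-combination identity for $p_{\mathrm{on}}$ is exactly what makes the two comparisons equivalent and forces a unique consistent branch; once this is in place, the rest is the elementary algebra of solving the two linear balance equations.
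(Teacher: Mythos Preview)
Your argument is correct and arguably cleaner than the paper's. Both proofs identify the two candidate prices $p_{\mathrm{on}}$ and $p_{\mathrm{off}}$; the difference lies in how the inconsistent branch is ruled out in the regime $\bar P \le P^\mu_t$. You use the convex-combination identity $p_{\mathrm{on}} = \tfrac{N_T}{N_T+1}\bar P + \tfrac{1}{N_T+1}P^\mu_t$ to show in one stroke that each branch is self-consistent on exactly one side of $\bar P = P^\mu_t$, treating the two regimes symmetrically as a fixed-point problem. The paper instead goes back to the primitive ``maximize traded volume'' rule: assuming $\bar P \le P^\mu_t$ but $P^{cl}_T \ge P^\mu_t$, it computes the executed buy volume $K\sum_{i:P_i > P^{cl}_T}(P_i - P^{cl}_T)$ and shows it is dominated by the volume obtained at $\bar P$, contradicting volume maximization. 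Your route is shorter and needs only the zero-imbalance characterization already recorded in \eqref{Eqcl}; the paper's route has the (minor) advantage of appealing directly to the defining rule rather than its corollary, so it does not presuppose that zero imbalance remains the right characterization once a one-sided order is present.
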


\begin{proof}

The governing rule of setting a clearing price is that the price maximizes the traded volume. 

Assume first that $\frac{\sum_{i=1}^{N_T} P_i}{N_T}  > P^\mu_t$ then $\frac{\sum_{i=1}^{N_T} P_i + P_t } {N_T+1} > \frac{P_tN_T + P^\mu_t } {N_T+1}=P^\mu_t.$ If we set $P^{cl}_T = \frac{\sum_{i=1}^{N_T} P_i + P^\mu_t } {N_T+1}$, $P^\mu_t$ would be executed as a selling order. Then it follows from \eqref{Eqcl} that $P^{cl}_T$ is set as above.

Assume now $\frac{\sum_{i=1}^{N_T} P_i}{N_T}  \leq P^\mu_t$, there are two possible cases, either $P^{cl}_T < P^\mu_t$ or $P^{cl}_T \geq P^\mu_t$. If $P^{cl}_T < P^\mu_t$, then the strategic seller's order would not be executed anyway, so $P^{cl}_T = \frac{\sum_{i=1}^{N_T} P_i } {N_T}$ follows from \eqref{Eqcl}. If $P^{cl}_T \geq P^\mu_t$, number of executed buying orders:

\begin{align*}
    N^{buy} &= \min\big\{ K\sum_{i:P_i > P^{cl}_T } (P_i - P^{cl}_T), \quad K\sum_{i:P_i < P^{cl}_T } (P^{cl}_T - P_i) + K (P^{cl}_T - P^\mu_t) \big\} \\
    &= K\sum_{i:P_i > P^{cl}_T } (P_i - P^{cl}_T) \text{\quad because \quad} K\sum_{i=1}^{N_T} (P_i - P^{cl}_T) \leq KN_T(P^\mu_t-P^{cl}_T) \leq 0 \leq K (P^{cl}_T - P^\mu_t)
\end{align*}

However, $N^{buy} \leq K\sum_{i:P_i > \frac{\sum_{i=1}^{N_T} P_i}{N_T} } (P_i - \frac{\sum_{i=1}^{N_T} P_i}{N_T})$. This implies that traded volumes would be larger if the clearing price is set as $\frac{\sum_{i=1}^{N_T} P_i}{N_T}$. This violates the clearing price setting rule when $P^{cl}_T > P^\mu_t$. Thus, $P^{cl}_T \leq P^\mu_t$ and so $P^{cl}_T = \frac{\sum_{i=1}^{N_T} P_i } {N_T}$.
\end{proof}
The clearing price \eqref{eq:clear} could also be written as:

\begin{equation*}
    P^{cl}_T = \frac{\sum_{i=1}^{N_T} P_i + \mathbf 1_{\frac{\sum_{i=1}^{N_T} P_i}{N_T}  > P^\mu_t} P^\mu_t } {(N_T)+ \mathbf 1_{\frac{\sum_{i=1}^{N_T} P_i}{N_T}  > P^\mu_t}}.
\end{equation*}

The clearing price $P^{cl}_T$ depends on the strategic trader's input $P^\mu_t$. In the following, we would write $P_T^{cl}$ instead of $P_T^{cl}(P^\mu_t)$ for convenience; however, $P_T^{cl}$ is a function of $P^\mu_t$.

\subsection{Strategic Trader's optimization and market quality}\label{sec;stratopt}

\subsubsection{Strategic trader optimization}
The strategic seller sends at time $\tau$ a volume $K(P_T^{cl}-P^\mu_{\tau})$ in the auction to sell the asset under the condition $P^{cl}_T > P^\mu_\tau$. If executed, the value of the strategic seller's portfolio at time $T$ is compared with the efficient price. The seller's payoff at the clearing time is thus $K(P^{cl}_T - P^\mu_{\tau})(P^{cl}_T - P^{*})$ if $P^{cl}_T > P_\tau$, and $0$ otherwise. \vspace{0.5em}

For any $(t,n,(p_i)_{i=1}^n) \in [0,T]\times \mathbb N\times \mathbb R^{n}$, we define
\begin{equation}
    \hat{\mu}(t,n,(p_i)_{i=1}^n) := \arg\max_{\mu} \mathbb{E}_{g}[\mathbf 1_{p_{\mu} \leq P_T^{cl}} K(P^{cl}_T - p_{\mu})(P^{cl}_T - P^{*}) | N_{t} = n, (P_i)_{i=1}^{N_t} = (p_i)_{i=1}^n  ],
    \label{eq:choosemu_basic}
\end{equation}
where $p_{\mu}$ refers to a normal random variable $\mathcal{N}(\mu,\sigma^2)$.

\begin{remark}
    In the proof of Theorem \ref{thm:notrader} and remark \ref{remark:hatmu} below, we will show the uniqueness of $\hat{\mu}$. However, if equation \eqref{eq:choosemu_basic} outputs more than one $\arg\max$ value, we define $\hat{\mu}$ as the minimum of these outputs.
\end{remark}

We set 
\begin{align*}
    &\hat{\mu}_{\tau} := \hat{\mu}(\tau,N_\tau,(P_i)_{i=1}^{N_\tau}),\\
    &\hat{P}_{\tau}=P_\tau^{\hat\mu} \sim \mathcal{N}(\hat{\mu}_{\tau},\sigma^2).
\end{align*}

Since the strategic seller wants to maximize payoff, the seller's problem is written as:
\begin{align}
\begin{split}
   V^{\circ}=\sup_\tau\; V^{\circ}(\tau), 
    \label{Eq:discretetrader}
\end{split}
\end{align}
with \[ V^{\circ}(\tau)= \mathbb{E}_{g}\Big[\mathbf 1_{\hat {P}_\tau \leq P_T^{cl}} \big\{ K(P^{cl}_T - \hat P_\tau)(P^{cl}_T - P^{*})  \big\}\Big]. \]





Intuitively, the optimal arrival of the trader should be always $T$ as the trader could use more information to make decisions. The following theorem proves that the optimal time to arrive is indeed $\tau =T$; the later the trader joins the auction the better the payoff. 

\begin{theorem}
The strategic trader benefits from arriving as late as possible in an auction. In other words, for any $0\leq s\leq t\leq T$ we have $V^\circ(s) \leq V^{\circ}(t)$.
\label{thm:discretetra}
\end{theorem}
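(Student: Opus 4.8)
The plan is to establish the monotonicity as an instance of the principle that the value of information is nonnegative, using that $\{\mathcal F_t\}_{0\le t\le T}$ is a filtration (so $\mathcal F_s\subseteq\mathcal F_t$ for $s\le t$) and that the arrival time $\tau$ enters the seller's problem \emph{only} through the information set on which her control $\mu$ is measurable. Indeed, the clearing rule \eqref{eq:clear} and the terminal payoff are assembled from the entire batch $N_T,(P_i)_{i=1}^{N_T}$ together with the single strategic order, so neither depends on $\tau$ except through the submitted price $P^\mu_\tau=p_\mu\sim\mathcal N(\mu,\sigma^2)$, whose law given the control $\mu$ is the same at every date.

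First I would recast $V^\circ(\tau)$ so as to isolate this dependence. Writing $\Pi(\mu):=\mathbf 1_{p_\mu\le P^{cl}_T}\,K(P^{cl}_T-p_\mu)(P^{cl}_T-P^*)$ and recalling that $\hat\mu_\tau$ is by definition the pointwise $\arg\max$ in \eqref{eq:choosemu_basic}, the tower property gives
\begin{equation*}
V^\circ(\tau)=\mathbb{E}_g\big[\,\mathbb{E}_g[\Pi(\hat\mu_\tau)\mid\mathcal F_\tau]\,\big]=\mathbb{E}_g\big[\,g(\tau,N_\tau,(P_i)_{i=1}^{N_\tau})\,\big],
\end{equation*}
where $g(t,n,(p_i)):=\sup_\mu\mathbb{E}_g[\Pi(\mu)\mid N_t=n,(P_i)_{i=1}^{N_t}=(p_i)]$ is the conditional value attained at the optimal control.

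Next I would carry out the replication step. Fix $s\le t$. As $\hat\mu_s$ is $\mathcal F_s$-measurable and $\mathcal F_s\subseteq\mathcal F_t$, it is also $\mathcal F_t$-measurable, hence an admissible (generally suboptimal) control for a seller arriving at $t$; since the fresh draw $p_\mu$ is independent of the resting prices and of $P^*$ given $\mu$, substituting $\hat\mu_s$ into the conditional objective and comparing with the conditional supremum yields, $\omega$-wise,
\begin{equation*}
\mathbb{E}_g[\Pi(\hat\mu_s)\mid\mathcal F_t]\le g(t,N_t,(P_i)_{i=1}^{N_t}).
\end{equation*}
Taking $\mathbb{E}_g$ and applying the tower property once more on the left-hand side,
\begin{equation*}
V^\circ(s)=\mathbb{E}_g[\Pi(\hat\mu_s)]=\mathbb{E}_g\big[\mathbb{E}_g[\Pi(\hat\mu_s)\mid\mathcal F_t]\big]\le\mathbb{E}_g\big[g(t,N_t,(P_i)_{i=1}^{N_t})\big]=V^\circ(t),
\end{equation*}
which is the assertion.

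The step I expect to be the main obstacle is not the information inequality itself but justifying the clean separation it rests on: one must verify that a later arrival leaves both the terminal clearing price $P^{cl}_T$ and the payoff functional $\Pi$ unchanged, so that genuinely the \emph{same} $\Pi(\mu)$ is optimized at the two dates over nested information. I would make this precise from \eqref{eq:clear}, observing that $P^{cl}_T$ aggregates the $N_T$ resting orders and the lone strategic order and therefore sees $\tau$ only through $P^\mu_\tau$. A secondary point requiring care is the measurability bookkeeping, namely that $N_s$, $(P_i)_{i=1}^{N_s}$, and hence $\hat\mu_s$ are indeed $\mathcal F_t$-measurable, which follows from $\{\mathcal F_t\}$ being the natural filtration of the marked arrival process, together with the conditional independence of the randomized price $p_\mu$ that legitimizes conditioning it on $\mathcal F_t$.
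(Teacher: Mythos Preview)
Your proposal is correct and follows essentially the same route as the paper: both arguments use the tower property together with $\mathcal F_s\subseteq\mathcal F_t$, observe that $\hat\mu_s$ is $\mathcal F_t$-measurable and hence a feasible (suboptimal) control at the later date, and then take expectations of the resulting conditional inequality. Your write-up is somewhat more explicit about why the payoff functional depends on $\tau$ only through the control, but the underlying mechanism is identical.
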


\begin{proof}
    We have
    \begin{align*}
        V^{\circ}(s) &=   \mathbb{E}_{g}\Big[\mathbf 1_{\hat{P}_{s} \leq P_T^{cl}} K(P^{cl}_T - \hat{P}_{s})(P^{cl}_T - P^{*}) \Big] \\
        &=  \mathbb{E}_{g}\Big[ \mathbb{E}_{g}\big[\mathbf 1_{\hat{P}_{s} \leq P_T^{cl}}  K(P^{cl}_T - \hat{P}_{s})(P^{cl}_T - P^{*})  |\mathcal{F}_s \big]\Big] \\
        &= \mathbb{E}_{g} \Bigl\{ \mathbb{E}_{g}\Big[ \mathbb{E}_{g}[\mathbf 1_{\hat{P}_{s} \leq P_T^{cl}} K(P^{cl}_T - \hat{P}_{s})(P^{cl}_T - P^{*})   |\mathcal{F}_t ] \Big|\mathcal{F}_s \Big] \Bigl\}
    \end{align*}
    
    where the last equality is based on $\mathcal{F}_s \subset \mathcal{F}_t$.

    Similarly,
        \begin{align*}
        V^{\circ}(t) &= \mathbb{E}_{g} \Bigl\{ \mathbb{E}_{g}\Big[ \mathbb{E}_{g}[\mathbf 1_{\hat{P}_{t} \leq P_T^{cl}} K(P^{cl}_T - \hat{P}_{t})(P^{cl}_T - P^{*})   |\mathcal{F}_t ] \Big|\mathcal{F}_s \Big] \Bigl\}.
    \end{align*}
    
    As $\hat{\mu}_{s}$ is $\mathcal{F}_s$-measurable and $\mathcal{F}_s \subset \mathcal{F}_t$, $\hat{\mu}_{s}$ is $\mathcal{F}_t$-measurable. Then by the definition of $\hat\mu$ in \eqref{eq:choosemu_basic}, \[\mathbb{E}_{g}\big[\mathbf 1_{\hat{P}_{t} \leq P_T^{cl}}  K(P^{cl}_T - \hat{P}_{t})(P^{cl}_T - P^{*}) |\mathcal{F}_t \big] \geq \mathbb{E}_{g}\big[\mathbf 1_{\hat{P}_s \leq P_T^{cl}} K(P^{cl}_T - \hat{P}_s)(P^{cl}_T - P^{*})| \mathcal{F}_t \big],\; \mathbb P-a.s..\] Then $V^\circ(t) \geq V^{\circ}(s)$.  
\end{proof}

\subsubsection{Market quality and exchange's viewpoint}
While the strategic trader wants to maximize her payoffs, the exchange would benefit from an arrival of the strategic trader which minimizes the spread between the clearing price $P^{cl}_T$ and the efficient price $P^{*}$. We denote by $P^{cl,\tau}_T$ the clearing price given by \eqref{eq:clear} for $P=\hat P_\tau$, that is the clearing price if the strategic seller arrives at time $\tau$. We have
\begin{equation*}
P^{cl,\tau}_T =
\left\{ \begin{aligned} 
    \frac{\sum_{i=1}^{N_T} P_i + \hat P_\tau } {N_T + 1} &\qquad \text{ if } \frac{\sum_{i=1}^{N_T} P_i}{N_T}  > \hat P_\tau\\
     \\
    \frac{\sum_{i=1}^{N_T} P_i } {N_T} &\qquad \text{ otherwise.}
\end{aligned} \right.
\end{equation*}

We introduce two different measures (disutility functions) of market quality: 

\begin{align*}
    (MQ)(\tau): &= \mathbb{E}[ |P^{cl,\tau}_T - P^{*}|^2 ], \\
    (MQ)^{\rho}(\tau): &= \mathbb{E}[ \exp{(\rho|P^{cl,\tau}_T - P^{*}|)} ],
\end{align*}

where $\rho>0$ is the risk aversion of the exchange with respect to the spread. As a benchmark and first-best case scenario, we assume that the exchange controls the arrival of the strategic seller. The exchange aims at solving


\begin{equation}
     \min_{\tau_{reg}} MQ (\tau_{reg}) = \min_{\tau_{reg}} \mathbb{E}[ |P^{cl,\tau_{reg}}_T - P^{*}|^2 ],  \label{eq:discretereg}
\end{equation}
or
\begin{equation}
     \min_{\tau_{reg}} MQ^{\rho} (\tau_{reg})  = \min_{\tau_{reg}} \mathbb{E}[ \exp{(\rho|P^{cl,\tau_{reg}}_T - P^{*}|)} ]. \label{eq:discretereg:rho}
\end{equation}


\subsection{Data and numerical analysis}\label{sec;data}

We now investigate the optimal arrival of the strategic seller solving \eqref{Eq:discretetrader} together with the optimal deviation form the efficient price $\hat \mu$ proposed in the auction and the market quality given by \eqref{eq:discretereg} or \eqref{eq:discretereg:rho}. We refer to Appendix \ref{appendix:discrete} for the details of the computations to perform the numerical study studied in this section.\vspace{0.5em}

We set $T = 10$ and discretize the time span by assuming that traders only join the markets at time $\tau \in \{1,2,...,9, 10\}$. We use trading data extracted from YahooFinance for Apple and Alphabet (Google) stock on period Oct-2-2023 to Dec-29-2023 to calibrate the parameters $\mu^{mm}$ and $\sigma$. We consider three months' data to avoid being affected by any single period's abnormal behavior.\vspace{0.5em}

We set the mean $\mu^{mm}$ of $P^{*}$ and $\{P_j\}_{j}$ to be the average day price of the three month period considered. For each trading day, we compute each stock's day price by (Open Price + Close Price + High Price + Low Price)/4. We get Apple's $\mu^{mm} = 184.39$ and Alphabet's $\mu^{mm} = 134.24$.\vspace{0.5em} 

As for the standard deviation $\sigma$ of $P^{*}$ and so $P_t$ and $\{P_j\}_{j}$, we use the formula 
\[\sigma^2 = \frac{1}{N}\sum_{k=1}^N (S_k - S_{k-1})^2,\] where $S_k$ and $S_{k-1}$ are the price of the stock day $k$ and day $k-1$ respectively, $N$ denotes the total number of days of the period. We got from the data set Apple's $\sigma = 1.76$ and Alphabet's $\sigma = 2.11$.\vspace{0.5em}

When the strategic seller is imperfectly informed about the drift of efficient price and the drift of the price proposed by the other traders, the estimation of $\mu^*$ and $\mu^{mm}$ by the strategic seller, we work under two scenarios. We first consider the case where the strategic seller under-estimates these parameters and named this case Case $(-)$, that is $\mu_{g}=\mu^*-\sigma$ and $\mu_{g}^{mm}=\mu^{mm}-\sigma$. Symmetrically, we consider the case where the strategic seller over-estimates $\mu^*$ and $\mu^{mm}$ and named this case Case $(+)$, that is $\mu_{g}=\mu^*+\sigma$ and $\mu_{g}^{mm}=\mu^{mm}+\sigma$.\vspace{0.5em}

We further set a bound for the strategic trader's $\hat{\mu}$ in the bounded set $[\mu_{g}^{*} - 4\sigma,\mu_{g}^{*} + 4\sigma]$.\footnote{Note that this assumption is not unrealistic in view of the existing literature since most of market making mathematical models require to chose a spread in a bounded interval. Moreover, in remark \ref{remark:hatmu}, we see that $\hat{\mu}$ could go to infinity for some cases, so it is necessary to bound $\hat{\mu}$ when solving numerically.} We assume that the slope of the supply function is $ K = 10$ and take $\lambda = 1$ in consideration of the computation cost.\vspace{0.5em}

\subsection{Strategic trader with full information: efficient but unfair market}\label{sec;fullinfo}

In a perfect world setting, $\mu^*_{g} = \mu^*$ and $\mu^{mm}_{g} = \mu^{mm}$, \textit{i.e.} the strategic trader either guesses correctly or has insider information. 

\subsubsection{The stock exchange prefers a strategic trader to join the market}\label{sec;prefertrader}

In this subsection, we emphasize the benefit for the exchange to attract a strategic trader in the market. We modify the assumption in section \ref{sec:define} by assuming $\mathcal{F}_T = \sigma\{N_T, (P_i)_{i=1}^{N_T},P^*\}$. Note that such modification does not change the result of Theorem \ref{thm:discretetra} which says that a strategic trader always join at $\tau = T$. We define $P_{T}^{cl,\emptyset}$ to be the clearing price in an auction market when there is no strategic trader. According to \eqref{eq:clear}, we have 
\[P_{T}^{cl,\emptyset} = \frac{\sum_{i=1}^{N_T} P_i } {N_T}.\] Hence, the market quality when no strategic trader arrives in the auction is given by 
\[MQ^{\emptyset} = \mathbb{E}[ |P^{cl,\emptyset}_T - P^{*}|^2 ].\]

\begin{remark}
    Without a strategic trader, the exchange would prefer more limit orders in the auction. This is because $MQ^{\emptyset} = \sigma^2(1+\frac{1-e^{-T\lambda}}{T\lambda})$ which decreases monotonically with $\lambda$.
    \label{remark:moretrade}
\end{remark}

\begin{theorem}
    Assume that the strategic trader is either a seller or a buyer, \textit{i.e.}, the trader's objective is defined as 
    \begin{align}
        \label{pb:gen}
        \sup_{\tau}\mathbb{E}[  K(P^{cl}_T -  \overline \mu_{\tau})(P^{cl}_T - P^{*})],
    \end{align}
    where $\overline {\mu}_t$ is the optimizer for $\esssup_{\mu} \mathbb{E}[  K(P^{cl}_T -\mu)(P^{cl}_T - P^{*})|\mathcal{F}_t]$. Then $MQ(\hat{\tau}) =  \frac{1}{4} MQ^{\emptyset} < MQ^{\emptyset}$ where $\hat \tau$ is the optimal arrival time of the strategic trader optimizing \eqref{pb:gen}. In other words, the exchange prefers the arrival of a strategic trader in the auction to improve the market quality.  
        \label{thm:notrader}
\end{theorem}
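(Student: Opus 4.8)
The plan is to reduce the problem to the terminal time and then solve a one-dimensional concave maximization conditionally on the terminal information. First I would invoke Theorem \ref{thm:discretetra}: since the modified filtration $\mathcal{F}_T = \sigma\{N_T,(P_i)_{i=1}^{N_T},P^*\}$ only enlarges the information, and the same tower-property argument applies verbatim to the buyer-or-seller payoff in \eqref{pb:gen}, the optimal arrival is $\hat{\tau}=T$. I would then work conditionally on $\mathcal{F}_T$, so that $N_T$, $S:=\sum_{i=1}^{N_T}P_i$ and $P^*$ are all known constants. Because the trader in \eqref{pb:gen} may take either side of the market, her order is always executed and the directional indicator of \eqref{eq:clear} disappears; the clearing price as a function of the submitted mean $\mu$ is simply $c(\mu)=\frac{S+\mu}{N_T+1}$.

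Next I would carry out the inner maximization $\overline{\mu}_T=\arg\max_{\mu}K(c(\mu)-\mu)(c(\mu)-P^*)$. Writing $x=\mu-P^{cl,\emptyset}_T$ and $a=P^{cl,\emptyset}_T-P^*$, where $P^{cl,\emptyset}_T=S/N_T$, a short computation gives $c(\mu)-\mu=-\frac{N_T}{N_T+1}x$ and $c(\mu)-P^*=a+\frac{x}{N_T+1}$, so that the objective becomes $-\frac{KN_T}{N_T+1}\bigl(ax+\frac{x^2}{N_T+1}\bigr)$. This is a strictly concave parabola in $x$, yielding the unique maximizer $x^\star=-\frac{N_T+1}{2}\,a$ (this also settles the uniqueness of $\hat{\mu}$ referenced in the remark preceding the statement). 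If one keeps the submitted price random with mean $\mu$ and variance $\sigma^2$, the conditional expected payoff differs only by the $\mu$-independent constant $\frac{K\sigma^2}{(N_T+1)^2}$, so the maximizer is unchanged.

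I would then substitute $x^\star$ back to obtain the key pathwise identity $c(\overline{\mu}_T)-P^*=a+\frac{x^\star}{N_T+1}=\frac{a}{2}=\tfrac12\bigl(P^{cl,\emptyset}_T-P^*\bigr)$, i.e. the optimal strategic order moves the clearing price exactly halfway back toward the efficient price. Hence $P^{cl,\hat\tau}_T-P^*=\tfrac12(P^{cl,\emptyset}_T-P^*)$, and squaring and taking expectations gives $MQ(\hat\tau)=\mathbb{E}[|P^{cl,\hat\tau}_T-P^*|^2]=\tfrac14\,\mathbb{E}[|P^{cl,\emptyset}_T-P^*|^2]=\tfrac14 MQ^{\emptyset}<MQ^{\emptyset}$.

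The routine part is the algebra; the step deserving the most care is establishing the exact halving identity and confirming that the maximizer is interior, so that the unconstrained first-order condition is the relevant one rather than a boundary value. A secondary point to keep clean is the passage from the pathwise (conditional) identity to the unconditional $MQ$, which is immediate here precisely because the factor $\tfrac12$ is the same on every path, so the squared relationship survives the outer expectation with no cross terms.
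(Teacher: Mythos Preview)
Your proposal is correct and follows essentially the same approach as the paper: reduce to $\hat\tau=T$, solve the conditional concave quadratic in the submitted mean, obtain the pathwise halving identity $P^{cl}_T(\overline\mu_T)-P^*=\tfrac12(P^{cl,\emptyset}_T-P^*)$, and take expectations. Your change of variables $x=\mu-P^{cl,\emptyset}_T$, $a=P^{cl,\emptyset}_T-P^*$ is a slightly tidier bookkeeping than the paper's direct expansion in $\mu$, but the substance is identical.
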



\begin{proof}
    
    Since $\hat{\tau} = T$, the strategic trader's problem is to find the $ \mathcal{F}_T-$measurable optimizer $\bar{\mu}$ for
    \begin{align}
    \begin{split}
    &\esssup_{\mu} \mathbb{E} \big[\big\{ K(P^{cl}_T - \mu)(P^{cl}_T - P^{*})  \big\}\big| {\mathcal{F}_T} \big] \\
    =&  \esssup_{\mu} \mathbb{E} \big[ \big\{ K(\frac{\sum_{i=1}^{N_T} P_i + \mu } {N_T + 1}- \mu)(\frac{\sum_{i=1}^{N_T} P_i + \mu } {N_T + 1} - P^{*})  \big\} \big| {\mathcal{F}_T}\big]\\
    =&  \esssup_{\mu} \mathbb{E} \big[\big\{ \frac{-N_T}{(N_T+1)^2}\mu^2 + \mu( \frac{(1-N_T)\sum_{i=1}^{N_T} P_i}{(N_T+1)^2}  + \frac{N_T}{N_T+1}P^*) + \frac{\sum_{i=1}^{N_T} P_i } {N_T + 1}(\frac{\sum_{i=1}^{N_T} P_i } {N_T + 1} - P^{*}) \big\}\big| {\mathcal{F}_T} \big].
     \end{split}
     \label{eq:noind}
    \end{align}

    The function $f$ defined by
    \begin{align*}
    &f(x)\\
    &=\mathbb{E}\left[\frac{-N_T}{(N_T +1)^2}\big| \mathcal{F}_T\right] x^2 + \mathbb{E}\left[ \frac{(1-N_T)\sum_{i=1}^{N_T} P_i}{(N_T+1)^2} + \frac{N_T}{N_T+1}P^*\big| \mathcal{F}_T\right] x + \mathbb{E}\left[ \frac{\sum_{i=1}^{N_T} P_i } {N_T + 1}(\frac{\sum_{i=1}^{N_T} P_i } {N_T + 1} - P^{*}) \big| \mathcal{F}_T \right] \\
     &= \left[\frac{-N_T}{(N_T +1)^2}\right] x^2 + \left[ \frac{(1-N_T)\sum_{i=1}^{N_T} P_i}{(N_T+1)^2} + \frac{N_T}{N_T+1}P^*\right] x + \left[ \frac{\sum_{i=1}^{N_T} P_i } {N_T + 1}(\frac{\sum_{i=1}^{N_T} P_i } {N_T + 1} - \mu^{*})  \right] 
    \end{align*}
    is maximized at $$x = -\frac{\frac{( N_T(N_T +1)P^{*} - (N_T -1)\sum_{i=1}^{N_T} P_i)}{(N_T +1)^2}}{2\frac{-N_T}{(N_T +1)^2}}.$$
    By the symmetry of the parabola $f$, the optimizer is $$\bar{\mu} = -\frac{\frac{( N_T(N_T +1)P^{*} - (N_T -1)\sum_{i=1}^{N_T} P_i)}{(N_T +1)^2}}{ 2\frac{-N_T}{(N_T +1)^2}}.$$

    If the strategic trader sends price at $\bar{\mu}$, then $$P_T^{cl}(\bar \mu) - P^* = \frac{1}{2}( \frac{\sum_{i=1}^{N_T} P_i}{N_T} - P^*) = \frac{1}{2}( P_{T}^{cl,\emptyset} - P^*).$$ Consequently,

    \[ \mathbb E[|P_T^{cl}(\bar \mu) - P^*|^2] =\frac14 \mathbb E[|P_{T}^{cl,\emptyset} - P^*|^2]=\frac14 MQ^\emptyset. \]

\end{proof}

\begin{remark}\label{remark:hatmu}
    In this remark, we illustrate where $\hat{\mu}$ is achieved. Since $\hat{\tau} = T$, the strategic seller's problem in \eqref{Eq:discretetrader} is to find the optimizer $\hat{\mu}$ for
    $$\esssup_{\mu} \mathbb{E} \big[\mathbf 1_{P^{\mu}_T \leq P_T^{cl}} \big\{ K(P^{cl}_T - P^{\mu}_T)(P^{cl}_T - P^{*})  \big\} \big|\mathcal{F}_T  \big].$$
    The difference between the above problem and problem \eqref{eq:noind} is the presence of the indicator function $\mathbf 1_{P^{\mu}_T \leq P_T^{cl}}$ and the trader sends $P^\mu_T$ centering around $\mu$ instead of sending $\mu$. If remove the indicator function from the above problem, we have $\hat{\mu} = \bar{\mu}$ by the symmetry of the probability distribution of a normal random variable.\vspace{0.5em} (1) Suppose for the chosen event $\omega \in \mathcal{F}_T$, $( P_{T}^{cl,\emptyset} - P^*) \geq 0$. By Theorem \ref{thm:notrader}, $\hat{\mu}(\omega) \geq \bar{\mu}(\omega)$ due to the presence of the indicator function $\mathbf 1_{P^{\mu}_T \leq P_T^{cl}}$ which shifts $\hat{\mu}(\omega)$ upward (see appendix \ref{appendix:illustrate} for more details).\vspace{0.5em} (2) Suppose for the chosen event $\omega \in \mathcal{F}_T$, $(P_{T}^{cl,\emptyset} - P^*) < 0$. For any given $\mu$, when $P^{\mu}_T > P_T^{cl}$, $\mathbf 1_{P^{\mu}_T \leq P_T^{cl}} K(P^{cl}_T - P^{\mu}_T)(P^{cl}_T - P^{*}) $ = 0 and when $P^{\mu}_T \leq P_T^{cl}$, $\mathbf 1_{P^{\mu}_T \leq P_T^{cl}} K(P^{cl}_T - P^{\mu}_T)(P^{cl}_T - P^{*})  = K(P^{cl}_T - P^{\mu}_T)(P^{cl}_T - P^{*}) 
    \leq K(P_T^{cl} - P^{\mu}_T)(P_{T}^{cl,\emptyset} - P^*) <0$. Thus, to maximize the objective \eqref{Eq:discretetrader}, the strategic trader would pick $\hat{\mu}(\omega)$ as large as possible, $\hat{\mu}(\omega) = \infty$.
    
\end{remark}

\subsubsection{Numerical results and economical insights}

The numerical analysis is presented in Table \ref{table:zerofee_perfect}. We find the optimal $\hat \tau_{reg} = 10$ by solving \eqref{eq:discretereg} or \eqref{eq:discretereg:rho}, which implies that the regulator benefits from an arrival of a strategic trader with full information at the end of an auction. Figure \ref{graph:rhoMQ} shows that the absolute value of the slope of $MQ^{\rho}$ increases when $\rho$ gets larger. This implies that the exchange would prefer a late-arriving trader in an auction even more if the exchange is highly risk averse (higher $\rho$) on market quality spread. In addition, we find the optimal $\hat \tau = 10$ for the arrival of the strategic seller solving \eqref{Eq:discretetrader} which confirms Theorem \ref{thm:discretetra} that traders prefer to arrive at the end.\vspace{0.5em}

The result $\hat \tau_{reg} = \hat \tau =10$ confirms the ``Law of One Price'' which says that arbitragers brings price convergence to the efficient price.\footnote{Suppose the efficient price is \$5 per share and the market price is \$4 share, then arbitragers would buy at \$4 to sell later at \$5. As they buy, they would push the market price upward until it reaches \$5 at which arbitragers stop buying thus stop moving the market price.} From this law, the exchange would prefer the strategic trader to arrive at a time that maximizes their arbitrage impact which is $\tau = 10$ when the trader knows everything about the market. In Table \ref{table:zerofee_perfect}, two columns show that the strategic trader's arbitrage impact increases as $\tau$ increases: column `Strategic Seller' shows that the seller's expected payoff increases with the arrival time and column `Price Impact' shows that the trader moves the clearing price more away from other transferred limit orders' aggregated opinion the later the trader joins the auction.\footnote{Price Impact :$= \mathbb{E}[|P_{T}^{cl,\emptyset}-P_{T}^{cl}|^2]$}  From the last column, we could see that the strategic trader exerts price impacting power by proposing higher prices as $\tau$ increases. Note that the proposed price is always higher than the true mean of the efficient price of underlying assets, implying that the seller wants to sell high to gain profits; in other words, the later the trader arrives, the larger the spread they proposes. \vspace{0.5em}

The agreement between $\hat \tau_{reg} = \hat \tau$ seems satisfying but indulging the strategic trader to join only at $\tau =10$ has several concerns. The strategic trader obviously uses time as a lever to gain advantage of earlier joined traders. Recall that our model of an auction market is a zero-sum game, so the greater gain of the trader, the larger loss of the other traders. This information unfairness could result in general traders losing interest in the auction market and turning entirely to the continuous market. In addition, the price impact of the strategic trader is a concern. From the table, we see the trader has the largest price manipulation power at $\tau =10$. This does not produce a problem in a perfect world with full information but is unrealistic and leads to flaws in a less perfect world as we discuss in the next section. 

\begin{table}[htbp]
	\caption{Results of Apple and Alphabet}
        \centering
	\label{table:zerofee_perfect}
	\begin{tabular}{@{}c|cccccc@{}}
		\toprule
            \textbf{Arrival Time of}& \multicolumn{5}{c}{\textbf{Apple}} &  \\
		\textbf{Strategic Seller}& Strategic Seller \textbf{\eqref{Eq:discretetrader}} & $MQ$ & $MQ^{0.1}$ & Price Impact & $\mathbb{E}[\hat\mu]$  \\
		\midrule
		  t=1 & 	0.5135 & 	3.2592 & 	1.1509 & 	0.0561 & 	185.2150\\
            t=2 & 	0.7489 & 	3.2516 & 	1.1509 & 	0.0721 & 	185.2238 \\
            t=3 & 	0.9871 & 	3.2435 & 	1.1506 & 	0.0874 & 	185.2996 \\
            t=4 & 	1.2273 & 	3.2354 & 	1.1503 & 	0.1022 & 	185.3871 \\
            t=5 & 	1.4691 & 	3.2274 & 	1.1501 & 	0.1166 & 	185.4779 \\
            t=6 & 	1.7123 & 	3.2195 & 	1.1499 & 	0.1307 & 	185.5669 \\
            t=7 & 	1.9566 & 	3.2117 & 	1.1496 & 	0.1444 & 	185.6473 \\
            t=8 & 	2.2014 & 	3.2035 & 	1.1493 & 	0.1586 & 	185.7147 \\
            t=9 & 	2.4451 & 	3.1948 & 	1.1487 & 	0.1731 & 	185.7359 \\
            t=10 & 	2.6843 & 	3.1832 & 	1.1471 & 	0.1869 & 	185.5862 \\
		\bottomrule
	\end{tabular}
\end{table}
\begin{table}[htbp]
        \centering
	\begin{tabular}{@{}c|cccccc@{}}
		\toprule
            \textbf{Arrival Time of}& \multicolumn{5}{c}{\textbf{Alphabet}} &  \\
		\textbf{Strategic Seller}& Strategic Seller \textbf{\eqref{Eq:discretetrader}} & $MQ$ & $MQ^{0.1}$ & Price Impact & $\mathbb{E}[\hat\mu]$  \\
		\midrule
		  t=1 & 	0.7381 & 	4.6863 & 	1.188 & 	0.0879 & 	135.6279\\
            t=2 & 	1.0765 & 	4.6755 & 	1.1879 & 	0.1134 & 	135.7143\\
            t=3 & 	1.4188 & 	4.6641 & 	1.1876 & 	0.1371 & 	135.829\\
            t=4 & 	1.764 & 	4.6525 & 	1.1872 & 	0.1602 & 	135.9517\\
            t=5 & 	2.1116 & 	4.6412 & 	1.1869 & 	0.1819 & 	136.0635\\
            t=6 & 	2.4612 & 	4.6299 & 	1.1866 & 	0.2035 & 	136.1738\\
            t=7 & 	2.8123 & 	4.6186 & 	1.1863 & 	0.225 & 	136.2715\\
            t=8 & 	3.164 & 	4.607 & 	1.186 & 	0.2467 & 	136.3597\\
            t=9 & 	3.5143 & 	4.5945 & 	1.1853 & 	0.2696 & 	136.4248\\
            t=10 & 	3.8581 & 	4.5779 & 	1.1836 & 	0.2907 & 	136.3411\\
		\bottomrule
	\end{tabular}
\end{table}
\begin{figure}[H]
	\centering
	\includegraphics[width=0.45\textwidth]{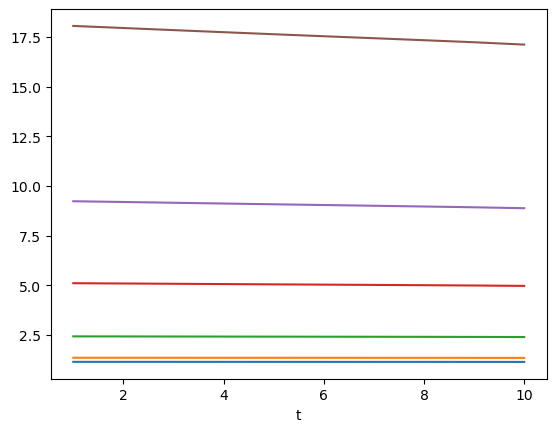}
        \includegraphics[width=0.45\textwidth]{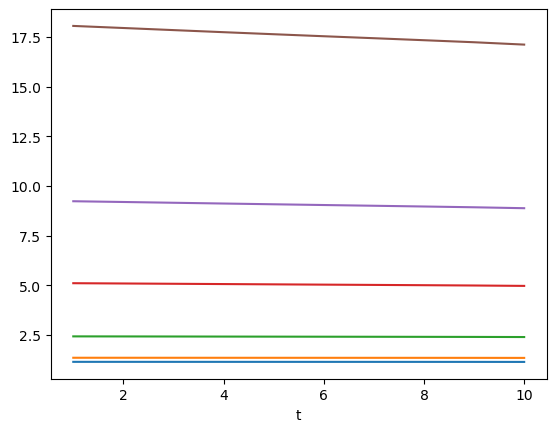}
         \caption{
         {\normalfont \small Apple's (left) and Google's (right) $MQ^{\rho}$. From the bottom curve to the top curve, $\rho$ fixed at $0.1, 0.2, 0.5, 0.8, 1, 1.2$ respectively. For each curve, it presents $MQ^{\rho}(t)$, with $t = 1, 2, ..., 10$. }}\label{graph:rhoMQ}
\end{figure}

\subsection{Imperfect information and inefficiency of auctions}
\label{sec;inefficiency}

In a more realistic framework, the strategic trader has a misconception of $\mu^*$ and $\mu^{mm}$. In the previous case, the trader uses price impacting power to drag clearing price towards the efficient price, but when the trader has a different target than the efficient price, the story changes. \vspace{0.5em}

We test two cases: case $(-)$ when $\mu^*_{g} = \mu^* - \sigma$ and $\mu^{mm}_{g} = \mu^{mm} - \sigma$ and case $(+)$ when $\mu^*_{g} = \mu^* + \sigma$ and $\mu^{mm}_{g} = \mu^{mm} + \sigma$.\footnote{Note that Case $(-)$ is more likely to happen than Case $(+)$ since we are studying a seller not a buyer.}\vspace{0.5em} 

The numerical analysis is presented in Table \ref{table:zerofee_imperfect}.\footnote{Alphabet's graphs carry the same spirit as Apple's graphs. For the sake of simplicity, we only analyze the data from Apple in this part.} In the case $(-)$ we find the optimal $\hat \tau_{reg} = 1$ by solving \eqref{eq:discretereg} or \eqref{eq:discretereg:rho}, which implies that the regulator wants the strategic trader to arrive at the beginning of an auction. Under this case, we see that the optimal arrival time of a strategic trader disturbs market efficiency as  $\hat \tau  =10$. In addition, we observe that the market quality becomes worse as $\tau$ increases, while the strategic seller's payoff increases with $\tau$. Thus the exchange would prefer the strategic trader to arrive as early as possible to minimize the spread between the efficient price and the clearing price, while the trader prefers to arrive as late as possible to maximize expected payoff. \vspace{0.5em} 

In the case $(+)$, we find the optimal $\hat \tau_{reg} = 1$ when using measure $MQ^{\rho}$ with large $\rho$, such $\rho =1, \rho =1.2$ and $\hat \tau_{reg} = 10$ when using measure $MQ$ or $MQ^{\rho}$ with small $\rho$. It is important to note that the difference between $MQ(t)$ or $MQ^{\rho}(t)$ across different $t$ is not significant, see Figure \ref{figure:imperfect_upper}. A possible explanation is that the trader tends to propose a high price due to the misconception $\mu^*_{g} = \mu^* + \sigma$, which results in many of the high-price orders not getting executed due to the fact that $P_T^{cl} < P$. As the trader's price manipulation power gets restricted in this way, the market quality measure varies little across different $\tau$ (the arrival time of the strategic trader). Following the same logic, the reason why we see a divided result (i.e., what is $\hat \tau_{reg}$) between large $\rho$ and small $\rho$ is that when $\rho$ grows the exchange becomes extremely sensitive to any possible difference between the clearing price and efficient price. Thus small chance events get exaggerated; for example, the event when $P_T^{cl,\empty}$ is high above $P^*$ and the strategic trader manages to propose a high $P$ such that $P < P_T^{cl}$. The strategic trader would be able to impact the price and catch this opportunity more if the trader arrives at $\tau=10$ instead of $\tau=1$, resulting $MQ^{\rho} (\tau =10) > MQ^{\rho} (\tau =1)$. In addition, note that $MQ^{\rho} (\tau =4) > MQ^{\rho} (\tau =10)$ even though the trader has larger price impact at $\tau =10$. Recall Section \ref{sec;prefertrader}, the exchange prefers a trader to join an auction market. The trader has more information to enable a successful execution (i.e. $P < P_T^{cl}$ ) at $\tau = 10$ than $\tau =4$ thus resulting this relation. Overall, case one deserves more attention than case two because case two misconception self-restricts a strategic seller's price impacting power.

\begin{table}[htbp]
	\caption{Results of Apple}
        \centering
	\label{table:zerofee_imperfect}
	\begin{tabular}{@{}c|cccccc@{}}
		\toprule
            \textbf{Arrival Time of}& \multicolumn{5}{c}{\textbf{Case ($-$): Lower Conjecture}} &  \\
		\textbf{Strategic Seller}& $MQ$ & $MQ^{0.1}$ & $MQ^{1}$ & Price Impact & $\mathbb{E}[\hat\mu]$  \\
		\midrule
		  t=1 & 	3.2765 & 	1.1514 & 	9.3008 & 	0.0747 & 	182.222\\
            t=2 & 	3.2963 & 	1.1521 & 	9.3864 & 	0.1212 & 	181.1093\\
            t=3 & 	3.3243 & 	1.1529 & 	9.512 & 	0.1728 & 	180.1473\\
            t=4 & 	3.3596 & 	1.1538 & 	9.6727 & 	0.2279 & 	179.2896\\
            t=5 & 	3.4 & 	1.1549 & 	9.8582 & 	0.283 & 	178.5356\\
            t=6 & 	3.4445 & 	1.1561 & 	10.0647 & 	0.3374 & 	177.8721\\
            t=7 & 	3.4898 & 	1.1574 & 	10.2765 & 	0.388 & 	177.3117\\
            t=8 & 	3.5356 & 	1.1586 & 	10.4933 & 	0.4351 & 	176.8247\\
            t=9 & 	3.5777 & 	1.1594 & 	10.6966 & 	0.4763 & 	176.3853\\
            t=10 & 	3.6161 & 	1.1591 & 	10.8899 & 	0.5133 & 	175.8701\\
		\bottomrule
	\end{tabular}
\end{table}
\begin{table}[htbp]
        \centering
	\begin{tabular}{@{}c|cccccc@{}}
		\toprule
            \textbf{Arrival Time of}& \multicolumn{5}{c}{\textbf{Case ($+$): Higher Conjecture}} &  \\
		\textbf{Strategic Seller} & $MQ$ & $MQ^{0.1}$ & $MQ^{1}$ & Price Impact & $\mathbb{E}[\hat\mu]$  \\
		\midrule
		  t=1 & 	3.2763 & 	1.1514 & 	9.3202 & 	0.1737 & 	188.0503\\
            t=2 & 	3.2772 & 	1.1516 & 	9.324 & 	0.2367 & 	188.8119\\
            t=3 & 	3.2775 & 	1.1516 & 	9.3265 & 	0.2864 & 	189.388\\
            t=4 & 	3.2778 & 	1.1515 & 	9.3288 & 	0.3251 & 	189.8009\\
            t=5 & 	3.2781 & 	1.1515 & 	9.3309 & 	0.3551 & 	190.0937\\
            t=6 & 	3.2785 & 	1.1515 & 	9.3327 & 	0.3779 & 	190.2949\\
            t=7 & 	3.2787 & 	1.1515 & 	9.3339 & 	0.3955 & 	190.4307\\
            t=8 & 	3.2786 & 	1.1514 & 	9.3341 & 	0.4089 & 	190.5091\\
            t=9 & 	3.2777 & 	1.151 & 	9.3317 & 	0.4193 & 	190.5048\\
            t=10 & 	3.274 & 	1.1496 & 	9.3206 & 	0.3307 & 	189.337\\
		\bottomrule
	\end{tabular}
\end{table}

\begin{figure}[htbp]
	\centering
        
	\includegraphics[width=0.45\textwidth]{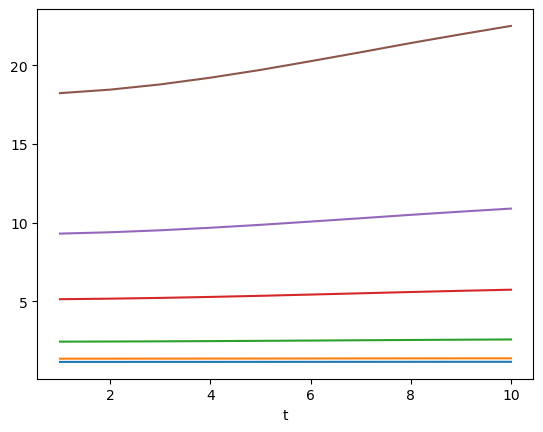}
        \includegraphics[width=0.45\textwidth]{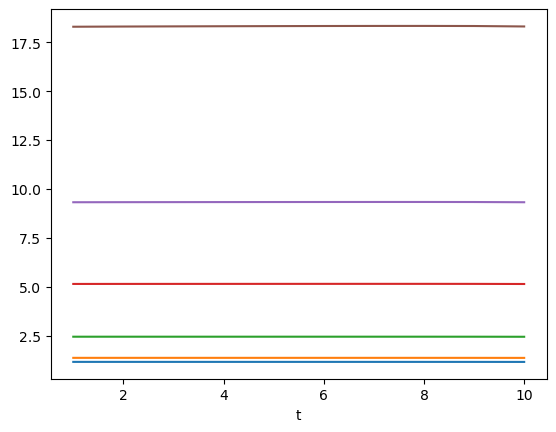}
         \caption{
         {\normalfont \small Case one, left graph, and case two, right graph. $MQ^{\rho}$. From the bottom curve to the top curve, $\rho$ fixed at $0.1, 0.2, 0.5, 0.8, 1, 1.2$ respectively. }}
         \label{figure:imperfect_upper}
\end{figure}

\section{Monitoring policies: transaction fees and clearing time randomization}\label{sec;bilevel}

\subsection{Bilevel optimization between the exchange and the strategic trader}


The previous section emphasized some flaws in auctions and a need to regulate the arrival behavior of traders in this type of market. To mitigate the conflict between the optimal arrival $\hat \tau$ from the strategic trader viewpoint and $\hat \tau_{reg}$, the one that the exchange would prefer for market quality reasons, we investigate and provide a quantitative analysis of two tools: a transition fees policy and a randomization of the clearing time. We denote by $\xi(t)$ the transaction fee function, where $t$ refers to the arrival time of traders and $\tau^{cl}$ to be the duration of an auction assumed to be a random variable whose parameter is controlled by stock exchanges.\vspace{0.5em}

In this model, traders have to pay $\xi(t)$ per volume of shares of the asset traded if they arrive at time $t$ in the auction. We further assume that a transaction fee $\xi$ would impact the arrival intensity of market participants who will arrive less often when fees are larger. To be specific, under a fee structure $\xi$, the intensity of $N_t$ becomes $\lambda^{mm}_t = e^{- \xi(t)}$. Note that this choice of arrival intensity has been studied in \cite{AS08,EEMRT}. Let $\mathbb{P}^{\xi}$ be the measure under which $N_t$ has intensity $\lambda^{mm}_t = e^{- \xi(t)}$ and denote $\mathbb{E}^{\xi}$ to be the expectation with regard to the measure $\mathbb{P}^{\xi}$. \vspace{0.5em}

We then define for any $(t,n,(p_i)_{i=1}^n)$, the set $\hat M_{t,n,(p_i)_{i=1}^n}^{\xi,\tau^{cl}}$ of optimizers 
\begin{equation}
    \hat{\mu}(t,n,(p_i)_{i=1}^n;\xi,\tau^{cl}) \in \arg\max_{\mu} \mathbb{E}^{\xi}_{g}[\mathbf 1_{p_{\mu} \leq P_{\tau^{cl}}^{cl}} K(P^{cl}_{\tau^{cl}} - p_{\mu})\big(P^{cl}_{\tau^{cl}} - P^{*}-\xi(t)\big) | N_{t} = n, (P_i)_{i=1}^{N_t} = (p_i)_{i=1}^n ],
    \label{eq:choosemu}
\end{equation}
where $p_{\mu}$ refers to a normal random variable $\mathcal{N}(\mu,\sigma^2)$. The problem of the strategic seller becomes
\begin{equation}\label{Vfee}
    V^{fee} (\xi,\tau^{cl}): = \sup_{\tau} V^{fee}_{\xi,\tau^{cl}}(\tau)
    \end{equation}
    with
    \[V^{fee}_{\xi,\tau^{cl}}(\tau)= \mathbb{E}^{\xi}_{g}\Big[\mathbf 1_{\tau \leq \tau^{cl}} \mathbf 1_{\hat P_{\tau} \leq P_{\tau^{cl}}^{cl}} \big\{ K(P^{cl}_{\tau^{cl}} - \hat P_{\tau})(P^{cl}_{\tau^{cl}} - P^{*}) - K(P^{cl}_{\tau^{cl}} - \hat P_{\tau})\xi(\tau) \big\}\Big],\]

where\footnote{Note that if the set $\hat M_{t,n,(p_i)_{i=1}^n}^{\xi,\tau^{cl}}$ is not reduced to one element, the choice of the optimizer does not affect the value function $V^{fee}$.} 
\begin{align*}
    &\hat{\mu}_{\tau} := \hat{\mu}(\tau,N_\tau,(P_i)_{i=1}^{N_\tau},\xi,\tau^{cl}),\text{ and }\hat{P}_{\tau} \sim \mathcal{N}(\hat{\mu}_{\tau},\sigma^2).
\end{align*}

We denote by $\widehat{\mathcal M}^{\xi,\tau^{cl}}$ the set of optimizers $(\hat\mu,\hat\tau)$ such that $\hat\tau$ is optimal for \eqref{Vfee} and $\hat\mu\in \hat M_{\hat\tau,N_{\hat\tau},(P_i)_{i=1}^{N_{\hat\tau}}}^{\xi,\tau^{cl}} $.\vspace{0.5em}

Assume that the strategic seller arrives in the auction at time $\hat\tau$ by proposing a price $\hat P_{\hat\tau}$. We denote by $P^{cl,\hat{\tau}}_{\tau^{cl}}$ the clearing price set by the exchange and defined by \eqref{eq:clear} with $P=\hat P_{\hat\tau}$ and $T=\tau^{cl}$. The problem of exchange depends on where its top interest lies. If the exchange wants to improve the actual price paid by the traders, the exchange aims at solving a bilevel optimization

\begin{equation}\label{opt:neutral_mq}
   \Lambda_0 : = \min_{\xi} \Lambda_0(\xi) \end{equation}
   where
   \[\Lambda_0(\xi) = \min_{\tau^{cl},(\hat \tau, \hat \mu)}\mathbb{E}^{\xi}[ (|P^{cl,\hat{\tau}}_{\tau^{cl}} - P^{*}| + \frac{\sum_{i=1}^{N_{\tau^{cl}}} \xi(\tau_i)}{N_{\tau^{cl}}} )^2], \]
   or by assuming that the exchange is risk averse
    \begin{equation}\label{opt:rho_mq}\Lambda^{\rho}_0: = \min_{\xi} \Lambda^{\rho}_0(\xi)\end{equation}
    where
    \[ \Lambda^{\rho}_0(\xi)= \min_{\tau^{cl},(\hat \tau, \hat \mu)} \mathbb{E}^{\xi}\Big[\exp{ \Big(\rho\big(|P^{cl,\hat{\tau}}_{\tau^{cl}} - P^{*}| + \frac{\sum_{i=1}^{N_{\tau^{cl}}} \xi(\tau_i)}{N_{\tau^{cl}}} \big)\Big)}\Big], \]
    

 subject to  
 \begin{align*}
\text{(IC): }& (\hat \tau, \hat \mu)\in \widehat{\mathcal M}^{\xi,\tau^{cl}}, \\
\text{(R): }& V^{fee}(\xi,\tau^{cl}) \geq \frac{\gamma}2\mathbb{E}^{\xi}_{g}[ \mathbf 1_{\hat{P}_{\hat\tau}\leq P^{cl,\hat{\tau}}_{\tau^{cl}}} K(P^{cl,\hat{\tau}}_{\tau^{cl}} - \hat{P}_{\hat\tau})  ], 
\end{align*}

where $\frac{\sum_{i=1}^{N_{\tau^{cl}}} \xi(\tau_i)}{N_{\tau^{cl}}} $ is the average fee paid by market participant under a fee structure $\xi$ and $\gamma$ is the difference between the best bid price and the best ask price.   The constraint (IC) is called the incentive compatibility constraint and models the best-reaction action $(\hat\tau,\hat\mu)$ is the strategic seller when the exchange announced a transaction fee $\xi$ and a clearing time rule $\tau^{cl}$. The constraint (R) is set to bound the transaction fee $\xi$ and ensure that the trader would benefit from the auction and not turn to the continuous trading market to avoid the transaction fee in the periodic auction market. With this reservation utility constraint, the auction is more competitive than trading on the CLOB directly.\vspace{0.5em}

\begin{remark}
   This problem can by also seen as a ``trader focused'' exchange. We assume the exchange wants to minimize the total spread for the trader, where  
\begin{align*}
\text{total spread} = &MQ + \text{transaction spread} \\
= &\text{ spread between efficient price and clearing price  } +\\
 &\text{ spread between clearing price and after-fee price (real executed price)}. 
\end{align*}
    
    We define the transaction spread of a trader in this way: suppose the market clears at $P_T^{cl}$ and the transaction fee is $\xi$ per share, a buyer would pay $P_T^{cl} + \xi$ to buy a share and a seller would receive $P_T^{cl} - \xi$ to sell a share. Average transaction spread is thus $\frac{1}{2}[(P_T^{cl} + \xi) -(P_T^{cl} - \xi)] = \xi$.
\end{remark}
\vspace{0.5em}

If the exchange focuses only on the market efficiency and wants to increase both fee gains and market quality, the problem becomes


\begin{equation}\label{opt:neutral_fee}
   \Lambda_0 : = \min_{\xi} \Lambda_0(\xi) \end{equation}
   where
   \[\Lambda_0(\xi) = \min_{\tau^{cl},(\hat \tau, \hat \mu)}\mathbb{E}^{\xi}[ |P^{cl,\hat{\tau}}_{\tau^{cl}} - P^{*}|^2 - \frac{\sum_{i=1}^{N_{\tau^{cl}}} \xi(\tau_i)}{N_{\tau^{cl}}} ], \]
   or by assuming that the exchange is risk averse
    \begin{equation}\label{opt:rho_fee}\Lambda^{\rho}_0: = \min_{\xi} \Lambda^{\rho}_0(\xi)\end{equation}
    where
    \[ \Lambda^{\rho}_0(\xi)= \min_{\tau^{cl},(\hat \tau, \hat \mu)} \mathbb{E}^{\xi}\Big[\exp{ \Big(\rho\big(|P^{cl,\hat{\tau}}_{\tau^{cl}} - P^{*}| - \frac{\sum_{i=1}^{N_{\tau^{cl}}} \xi(\tau_i)}{N_{\tau^{cl}}} \big)\Big)}\Big], \]


 subject to  
 \begin{align*}
\text{(IC): }& (\hat \tau, \hat \mu)\in \widehat{\mathcal M}^{\xi,\tau^{cl}}, \\
\text{(R): }& V^{fee}(\xi,\tau^{cl}) \geq \frac{\gamma}2\mathbb{E}^{\xi}_{g}[ \mathbf 1_{\hat{P}_{\hat\tau}\leq P^{cl,\hat{\tau}}_{\tau^{cl}}} K(P^{cl,\hat{\tau}}_{\tau^{cl}} - \hat{P}_{\hat\tau})  ].
\end{align*}

For the numerical solutions, we set the bid-ask spread $\gamma$ by referring to the estimation method in \cite{Farshid}: for a period of N days, $\gamma = \frac{1}{N}\sum_{i=1}^{N}\gamma_i$, where $$\gamma_i = \sqrt{\max\{4(c_t- (l_t+h_t)/2)(c_t- (l_{t+1}+h_{t+1})/2),0\}},$$ $c_t$ is the daily close log price, $l_t$ is the daily low log price, and $h_t$ is the daily high log price. By this method, we set Apple's $\gamma = 0.0039$ and Alphabet's $\gamma = 0.0065$. 

\begin{remark}
    Note that for both market impact or market efficiency optimization problems, we focus solely on the spread and fee of the strategic trader. We do not include the spread and fee of the other transferred limit orders because we assume these traders are transferred from CLOB to the periodic auction for execution and thus do not face the transaction fee imposed in the periodic auction.
\end{remark}

\subsection{Randomization without fees}

In this section, we focus on the solution to \eqref{opt:neutral_mq} or \eqref{opt:rho_mq}, \eqref{opt:neutral_fee} or \eqref{opt:rho_fee} when $\xi=0$. Echoing the discussion in Section \ref{sec;inefficiency}, we focus on case $(-)$, that is a misconception $\mu^*_{g} = \mu^* - \sigma$ and $\mu^{mm}_{g} = \mu^{mm} - \sigma$ for the strategic trader. We assume that $\tau^{cl}$ is a Bernoulli random variable taking values in $\{9,10\}$ with $p = \mathbb{P}(\tau^{cl}=9) = 1-\mathbb{P}(\tau^{cl}=10)$ and $p \in [0,1]$. The optimization on $\tau^{cl}$ in both \eqref{opt:neutral_mq} or \eqref{opt:rho_mq}, \eqref{opt:neutral_fee} or \eqref{opt:rho_fee} is reduced to optimize $p$, that is

\begin{equation*}
   MQ : = \min_{p\in[0,1]}   MQ (p) ,
   \end{equation*}


   with 
        \[ MQ (p)= \min_{(\hat \tau, \hat \mu)}\mathbb{E}[ |P^{cl,\hat{\tau}}_{\tau^{cl}} - P^{*}|^2  ], \]
 subject to 
  \begin{align*}
\text{(IC): }& (\hat \tau, \hat \mu)\in \widehat{\mathcal M}^{0,\tau^{cl}}, \\
\text{(R): }& V^{fee}(0,p) \geq \frac{\gamma}2\mathbb{E}[ \mathbf 1_{\hat{P}_{\hat\tau}\leq P^{cl,\hat{\tau}}_{\tau^{cl}}} K(P^{cl,\hat{\tau}}_{\tau^{cl}} - \hat{P}_{\hat\tau})  ], 
\end{align*}

We recall that the strategic trader decides at time 0 when to arrive (i.e., chooses $\hat\tau$ before the auction starts). For both Apple and Alphabet, we observe in Table \ref{table:randnofee} that the  optimal $\hat p$ is around  $0.08$. \vspace{0.5em}

With randomization set at $\hat p = 0.08$, the stock exchange successfully encourages a strategic trader to arrive earlier. We now have $\hat{\tau}=9$ instead of $\hat{\tau}=10$ which is the optimal arrival time without randomization. We also observe the market quality improves from $3.6659$ to $3.6365$ for Apple and from $5.2687$ to $5.2263$ for Alphabet comparing with Table \ref{table:zerofee_imperfect}. Looking closely at the results in Table \ref{table:randnofee}, we observe that even a small $p$, as small as $0.08$ (which means that the auction only has $0.08$ probability to end at $9$ instead of $10$), would be sufficient to encourage the strategic trader to not arrive at $\tau=10$. This explains why the optimal $\hat p$ is very close to $0$: the exchange prefers the case when the strategic trader arrives before the closing time and at $p =0.08$ there is only $8\%$ of chance that the strategic trader arrives at the closing time of the auction which happens to be $9$.

\begin{table}[htbp]
	\caption{}
	\label{table:randnofee}
 \begin{center}
	\begin{tabular}{@{}c|cc|cc@{}}
		\toprule
            \textbf{Randomization $p$}& \multicolumn{2}{c|}{\textbf{Apple}} & \multicolumn{2}{c}{\textbf{Alphabet}} \\
		{}&  $MQ$ & $\hat{\tau}$ & $MQ$ &$\hat{\tau}$ \\
	       \midrule
		  0.0 & 	3.6659 & 	10 & 	5.2687 & 	10\\
            0.06 & 	3.6456 & 	10 & 	 5.2395 & 	10\\
            0.07 & 	3.6422 & 	10 & 	 5.2346 & 	10\\
            0.08 & 	3.6365 & 	9 & 	5.2263 & 	9\\
            0.09 & 	3.6374 & 	9 & 	5.2277 & 	9\\
            0.1 & 	3.6384 & 	9 & 	5.2291 & 	9\\
            0.5 & 	3.6764 & 	9 & 	5.2851 & 	9\\
            1.0 & 	3.7244 & 	9 & 	5.3526 & 	9\\
	       \bottomrule
	\end{tabular}
 \end{center}
\end{table}

\subsection{Optimal transaction fees indexed on time to improve price impact for the trader}\label{sec;mq1}

We now turn to the solutions of \eqref{opt:neutral_mq} and \eqref{opt:rho_mq}. As a classical result of contract theory and bilevel optimization, the shape of the contract $\xi$ has to be specified in a discrete time framework. We assume that the exchange proposes two type of exchange fees: either a linear fee indexed on the time of arrival of the strategic trader and the efficient price  $\xi_\ell(t) = at $ or a square fee structure $\xi_s(t) = at^2$, with randomization of closing time $\tau^{cl}$. By selecting either a linear fee structure or a square fee, the bilevel optimization problem thus becomes

\begin{equation}\label{opt:neutralmq}
   \Lambda_0 : = \min_{\xi\in\{\xi_\ell,\xi_s\}} \Lambda_0(a),\end{equation}
   where for a choice of $\xi\in\{\xi_\ell,\xi_s\}$
   \[ \Lambda_0(a)=\min_{\tau^{cl},(\hat \tau, \hat \mu)}\mathbb{E}^{\xi}[ (|P^{cl,\hat{\tau}}_{\tau^{cl}} - P^{*}| +\frac{\sum_{i=1}^{N_{\tau^{cl}}} \xi(\tau_i)}{N_{\tau^{cl}}}  )^2], \]
   
   or by assuming that the exchange is risk averse
    \begin{equation}\label{opt:rhomq}\Lambda^{\rho}_0: = \min_{a} \Lambda_0^{\rho}(a)
    \end{equation}
    where
    \[ \Lambda_0^\rho(a)=\min_{\tau^{cl},(\hat \tau, \hat \mu)} \mathbb{E}^{\xi}\Big[\exp{ \Big(\rho\big(|P^{cl,\hat{\tau}}_{\tau^{cl}} - P_T^{*}| + \frac{\sum_{i=1}^{N_{\tau^{cl}}} \xi(\tau_i)}{N_{\tau^{cl}}}  \big)\Big)}\Big], \]

 subject to  
 \begin{align*}
\text{(IC): }& (\hat \tau, \hat \mu)\in \widehat{\mathcal M}^{\xi,\tau^{cl}}, \\
\text{(R): }& V^{fee}(\xi) \geq \frac{\gamma}2\mathbb{E}^{\xi}_{g}[ \mathbf 1_{\hat{P}_{\hat\tau}\leq P^{cl,\hat{\tau}}_{\tau^{cl}}} K(P^{cl,\hat{\tau}}_{\tau^{cl}} - \hat{P}_{\hat\tau})  ].
\end{align*}

As before, we assume that $p = \mathbb P(\tau^{cl}=9)=1-\mathbb P(\tau^{cl}=10)$ and the exchange optimizes on the parameter $p$ to optimize $\tau^{cl}$. In addition, as the numerical method discretizes the time span $[0,T]$, the average fee $\frac{\sum_{i=1}^{N_{\tau^{cl}}} \xi(\tau_i)}{N_{\tau^{cl}}}$ computed by the numerical method is actually $\frac{\sum_{t=1}^{\tau^{cl}} \xi(t)(N_{t} - N_{t-1})}{N_{\tau^{cl}}}$ as we assume traders who arrive between time $(t-1)$ and time $t$ pay $\xi(t)$ for $t \in {1,...,T}$.  \vspace{0.5em}

The results are presented in Table \ref{table:section33} with the optimal fee structure for Apple ('Stock'-'Apple' cells) and Alphabet ('Stock'-'Alphabet' cells). The second column ``MQ measure'' represents the choice of the problem  \eqref{opt:neutralmq} or \eqref{opt:rhomq} for different values of $\rho$. The third column gives the optimal fees' structure while the fourth column gives the optimal value of $p$. The fifth column gives the optimal arrival of the strategic trader $\hat \tau$ solving \eqref{Vfee} with the corresponding optimal fees and $p$ value. The sixth column ``Exchange gain'' represents the value $\Lambda_0$ or $\Lambda_0^{\rho}$ for different values of $\rho$ corresponding to the value function of the exchange, while the last column compared this value with the market quality $MQ(\hat\tau)$ when the fee is $0$ (no transaction fees).\vspace{0.5em}

From the last two columns of the table, we could see that our transaction fee model improves market quality since $\Lambda_0$ is always smaller that the market quality when $\xi=0$. From the strategic trader's perspective, they now have the incentive to arrive earlier as their optimal arriving time $\hat \tau$ is less than $10$ for several of the cases shown here. We could thus conclude that the added fee is effective to cure the flaws of a periodic auction system observed in Section \ref{sec;inefficiency}. \vspace{0.5em}

\begin{table}[htbp]
    \centering
    \caption{Selected Numerical Results}
    \label{table:section33}
    \begin{tabular}{@{} c|c| c| c| c| c |c @{}} 
        \toprule
        \addlinespace[1ex]
        Stock & MQ Measure & Optimal Fee & Randomization&  $\hat{\tau}$ & Exchange's Value & $MQ^0$\\
        \midrule
        Apple & $\Lambda_0$ & $0.003t^2$ & $p=0$ & 6 & 3.553 & 3.666 \\[0.3em]
        & $\Lambda_0^{0.01}$ &  $0.001t^2$& $p=0$ & 10 & 0.999 & 1.004 \\[0.3em]
        & $\Lambda_0^{0.5}$ & $0.001t^2$& $p=0$ & 10 & 2.598 & 2.599 \\[0.3em]
        & $\Lambda_0^{1.5}$ &$0.002t^2$ & $p=0$ & 7 & 76.001 & 80.180 \\
        \midrule
        Alphabet & $\Lambda_0$ & $0.004t^2$ & $p=0$ & 5 & 5.064 & 5.269 \\[0.3em]
        & $\Lambda_0^{0.01}$ &  $0.001t^2$& $p=0$ & 10 & 1.002  & 1.007 \\[0.3em]
        & $\Lambda_0^{0.5}$ & $0.001t^2$& $p=0.1$ & 9 & 3.313 & 3.326 \\[0.3em]
        & $\Lambda_0^{1.5}$ &$0.003t^2$ & $p=0$ & 6 & 294.589 & 318.554 \\
        \bottomrule
    \end{tabular}
\end{table}

We now turn to a deeper study of how our transaction fee model improves the market's quality for the Apple's stock price.\footnote{Alphabet's graphs carry the same spirit as Apple's graphs. For the sake of simplicity, we only analyze the data from Apple in this part.} When the fees increase initially, the spread term $|P_T^{cl} - P^*|$ becomes better due to the fact that the strategic trader is willing to arrive earlier. Note that both the linear transaction fee structure and square transaction fee structure encourage the strategic seller to arrive earlier in the market. From Figure \ref{fig:result33} (a) and (b), we see that as the fee increases (led by increasing $a$), the strategic seller's optimal $\tau$ gradually declines from $10$ to $1$ and the decline rate seems to coincide with the structure of the fee model as Figure \ref{fig:result33} (a) shows a linear declining pattern and Figure \ref{fig:result33} (b) shows an accelerating declining pattern.\vspace{0.5em}

However, if the exchange continues to increase the transaction fee, the increasing fee term $\xi$ in the stock exchange's utility function $\Lambda_0$ and $\Lambda_0^{\rho}$ starts to overpower the benefit. In addition, a larger fee would discourage market participants to join the market (as intensity $\lambda^{mm}$ decreases) and according to \ref{remark:moretrade}, market quality deteriorates when there are less traders in the market. \vspace{0.5em}

\begin{figure}[htbp]
    \centering
    \begin{minipage}{0.45\textwidth}
        \centering
        \includegraphics[width=\linewidth]{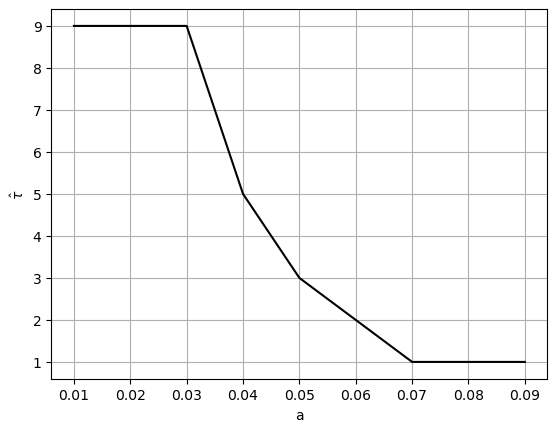}
        \caption*{(a) Linear Fee $\hat{\tau}(a)$}
    \end{minipage}
    \begin{minipage}{0.45\textwidth}
        \centering
        \includegraphics[width=\linewidth]{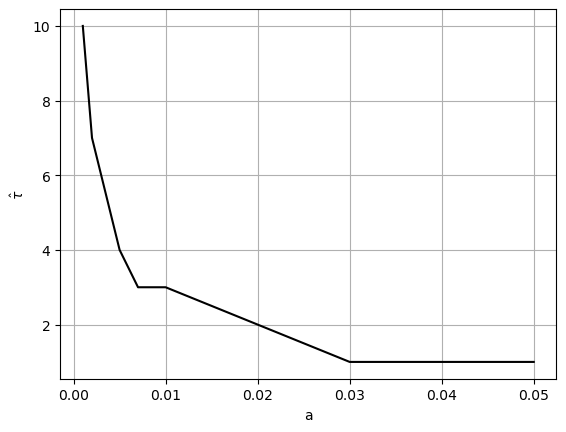}
        \caption*{(b) Square Fee $\hat{\tau}(a)$}
    \end{minipage}
    \begin{minipage}{0.45\textwidth}
        \centering
        \includegraphics[width=\linewidth]{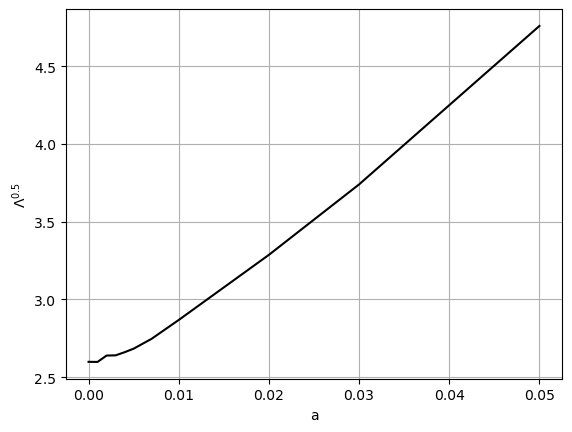}
        \caption*{(c) Under square Fee $\Lambda_0^{0.5}(a)$ }
    \end{minipage}
    \begin{minipage}{0.45\textwidth}
        \centering
        \includegraphics[width=\linewidth]{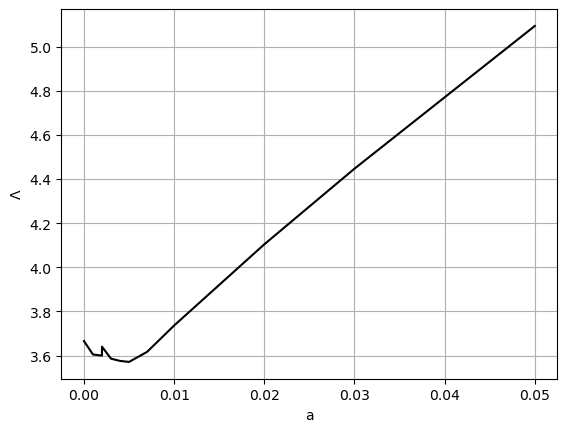}
        \caption*{(d) Under square Fee $\Lambda_0(a)$ }
    \end{minipage}
    \caption{Numerical analysis of the optimal arrival of the strategic seller (a) and (b), the value of the exchange with risk aversion parameter $\rho =0.5$ (c), the value of the exchange without risk aversion parameter (d). Note that $\hat{\tau}$ is a function of both $a$ and $p$; for example, under linear fee, $\hat{\tau}(a=0.06,p=0) =1$ and $\hat{\tau}(0.06, 0.2) =2$. (a) and (b) show $\hat{\tau}(a,p(a))$, where $p(a) = \arg\min_{p} \Lambda_0^{0.5}(a,p)$. }

    \label{fig:result33}
\end{figure}

Note moreover that $p=0$ is optimal for all except one cases shown in table \ref{table:section33}, that is the randomization of the clearing time becomes useless by adding transaction fees. This is likely due to the fact that fees discourage the number of traders in the market by decreasing $\lambda^{mm}$. However, to achieve better market quality, the exchange would prefer more traders to join, so would prefer to hold the auction for a longer time span ($p=0$ means the market always close at time $t=10$ instead of $t=9$).\vspace{0.5em}

\subsection{Optimal transaction fees indexed on time: improving market quality while benefiting from the fees}\label{sec:qualityfees}

We now turn to the solutions of \eqref{opt:neutral_fee} and \eqref{opt:rho_fee}. The bilevel optimization problem thus becomes

\begin{equation}\label{opt:neutralfee}
   \Lambda_0 : = \min_{\xi\in\{\xi_\ell,\xi_s\}} \Lambda_0(a),\end{equation}
   where for a choice of $\xi\in\{\xi_\ell,\xi_s\}$
   \[ \Lambda_0(a)=\min_{\tau^{cl},(\hat \tau, \hat \mu)}\mathbb{E}^{\xi}[ |P^{cl,\hat{\tau}}_{\tau^{cl}} - P_T^{*}|^2 - \frac{\sum_{i=1}^{N_{\tau^{cl}}} \xi(\tau_i)}{N_{\tau^{cl}}}  ], \]
   or by assuming that the exchange is risk averse
    \begin{equation}\label{opt:rhofee}\Lambda^{\rho}_0: = \min_{a} \Lambda_0^{\rho}(a)
    \end{equation}
    where
    \[ \Lambda_0^\rho(a)=\min_{\tau^{cl},(\hat \tau, \hat \mu)} \mathbb{E}^{\xi}\Big[\exp{ \Big(\rho\big(|P^{cl,\hat{\tau}}_{\tau^{cl}} - P_T^{*}| - \frac{\sum_{i=1}^{N_{\tau^{cl}}} \xi(\tau_i)}{N_{\tau^{cl}}}  \big)\Big)}\Big], \]
 subject to  
 \begin{align*}
\text{(IC): }& (\hat \tau, \hat \mu)\in \widehat{\mathcal M}^{\xi,\tau^{cl}}, \\
\text{(R): }& V^{fee}(\xi) \geq \frac{\gamma}2\mathbb{E}^{\xi}_{g}[ \mathbf 1_{\hat{P}_{\hat\tau}\leq P^{cl,\hat{\tau}}_{\tau^{cl}}} K(P^{cl,\hat{\tau}}_{\tau^{cl}} - \hat{P}_{\hat\tau})  ].
\end{align*}

As before, we assume that $p = \mathbb P(\tau^{cl}=9)=1-\mathbb P(\tau^{cl}=10)$ and the exchange optimizes on the parameter $p$ to optimize $\tau^{cl}$.

\begin{remark}\label{rem:auctioncost}
    In an informal way, we can see from \eqref{opt:neutralfee} or \eqref{opt:rhofee} that
    \[ \text{Exchange value function} = \text{market quality cost } - \text{ fees},\]
or in other words
\[\text{Market quality cost} = \text{Exchange gain } + \text{ Fees' gain}.\]\end{remark}

The results are presented in Table \ref{table:conclusion34}. The seventh column is the incomes generating by the transaction fees, that is the part $\xi(\hat\tau)$ calculated in an informal way with column "$MQ^{\xi}$" - "Exchange's Gain". The eigth column shows the market quality $MQ(\hat\tau)$ by considering the optimal fees $\xi$ while the last column compared this value with the market quality $MQ(\hat\tau)$ when the fee is $0$ (no transaction fees).\vspace{0.5em}


\begin{table}[htbp]
    \centering
    \caption{Selected Numerical Results}
    \label{table:conclusion34}
    \begin{tabular}{@{} c|p{1.5cm}| c| p{1.6cm}| c| *{3}{c|}c @{}} 
        \toprule
        \addlinespace[1ex]
        Stock & MQ Measure & Optimal Fee & Random-ization&  $\hat{\tau}$ & Exchange's Gain & Fee Gain & $MQ^{\xi}$ & $MQ^0$ \\
        \cline{1-8} 
        \midrule
        Apple & $\Lambda_0$ & $0.24t^2$ & $p=0$ & 1 &  0.417 & 3.546 &  3.963 & 3.666 \\[0.3em]
        & $\Lambda_0^{0.01}$ &$0.24t^2$ & $p=0$ & 1 &  0.971 & 0.035 &  1.006& 1.004  \\[0.3em]
        & $\Lambda_0^{0.5}$ & $0.24t^2$ & $p=0$ & 1 &  0.466 & 2.276 &  2.742 & 2.599 \\[0.3em]
        & $\Lambda_0^{1.5}$ & $0.23t^2$ & $p=0$ & 1 & 0.768 & 153.904  & 154.672 & 80.180 \\
        \midrule
        Alphabet & $\Lambda_0$ & $0.23t^2$ & $p=0$ & 1 & 2.173  & 3.537  &  5.710 & 5.269 \\[0.3em]
        & $\Lambda_0^{0.01}$ &$0.24t^2$ & $p=0$ & 1 & 0.974  & 0.035 & 1.009 & 1.007 \\[0.3em]
        & $\Lambda_0^{0.5}$ & $0.24t^2$ & $p=0$ & 1 & 0.610  & 2.981 & 3.591  & 3.326 \\[0.3em]
        & $\Lambda_0^{1.5}$ & $0.22t^2$ & $p=0$ & 1 & 5.026 & 982.068 & 987.094 & 318.554 \\
        \bottomrule
    \end{tabular}
\end{table}

\begin{figure}[htbp]
    \centering
    \includegraphics[width=0.5\textwidth]{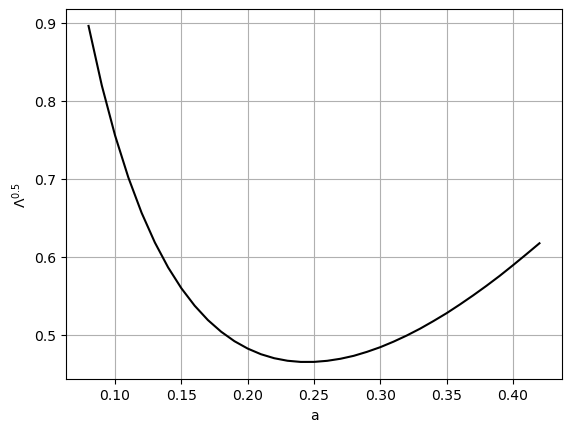}
    \caption{Graph of $a$ v.s. $\Lambda_0^{0.5}(a,p(a))$, which is the value of the exchange with risk aversion parameter $\rho =0.5$ under square fee structure.}
    \label{fig:result34}
\end{figure}

Given the exchange is both market quality and fee driven, we see that the optimal transaction fees are different from the previous section. The fee part becomes the main driver in approaching the minimum of $\lambda_0(a)$ or $\lambda_0^{\rho}(a)$. From figure \ref{fig:result34}, we see that as fee increases the exchange becomes better significantly due to larger fee gains. However, the concavity of the graph also implies a declining fee benefit as fee increases. On the one side, this is due to the competition between the fee term $\xi$ and the spread term $|P_T^{cl} - P^*|$ in the stock exchange's utility function $\Lambda_0$ and $\Lambda_0^{\rho}$. A larger fee discourages traders to join thus harms market quality. On the other side, a larger fee per share would lead to less traders in the market thus less fee payers in the market. Therefore, the graph finally reaches an optimal point (a minimum point) under these considerations. From the strategic trader's perspective, they now have the incentive to arrive at time $\tau = 1$ instead of $\tau = 10$ to avoid late arrival "penalties". \vspace{0.5em}

Note moreover that $p=0$ is optimal since a longer time span allows the auction to receive more traders thus bringing larger fee gains for the exchange.\vspace{0.5em}



\section{Conclusions}\label{sec;ccl}

We study the strategic arrival of a trader and show that a strategic trader always joins an auction at the last moment to have the greatest price manipulation power. Such behavior could impair fairness and market quality especially if the trader has a misconception of the efficient price. We propose two solutions: randomizing the closing time and introducing time-dependent transaction fees. With randomization ($92\%$ of chance auction closes at $T=10$ and $8\%$ of chance closes at $T=9$), the strategic trader would join the auction before the last moment and market quality is improved. With transaction fees, we consider two possible interests of the exchange, improving the market quality and making fee gains. Under either considerations, our solution provides better results than without randomization or fees. 

\bigskip

Our results certainly have limitations. We assume the presence of a single strategic trader instead of allowing  multiple strategic traders to compete. In terms of model setting, \cite{gayduk} studies a general optimal control and stopping problem with discrete controls and proves the existence of an equilibrium in a game in which every player is strategic, without studying randomization or transaction fees policy. Our paper is a specific optimal control and stopping problem with discrete stopping time and we assume all but one players are non-strategic. \cite{gayduk} sheds light on a possible extension of our model to a more general case when there are more than one strategic players in an auction. \cite{alfonsi} studies the optimal execution strategy of a strategic trader in a continuous market whose order has price impact on the market and finds that the existence of price manipulation strategies depend on the choice of models. This reminds us that changing certain settings of our model may change the behavior of the strategic trader and possibly our conclusion. Updating the model to include interactions of multiple strategic traders would probably lead to similar results of this paper. Strategic traders would very likely still choose to arrive at the last moment to avoid sharing their views too early and to gain information of others to take advantage. We also assume a discrete optimal stopping control instead of a continuous one. The continuous version of this problem is in the works. We assume the auction market clears all order imbalance while in reality order imbalance exists. Finding a workable model to allow order imbalance is a future direction to consider. In addition, we model simply one round of an auction. If running the auction for several rounds, we could possibly see a larger negative impact of the strategic trader's arrival-timing strategy and see a greater need to regulate the arrival of traders. 

\bigskip

Finally, we did not mention priority rules for a periodic auction. This is because our model assumes zero imbalance of orders, so we do not find the need of a priority rule. In reality, order imbalance exists; for example, if A wants to sell 10 shares of stock, B wants to buy 4 shares of stock, and C wants to buy 8 shares of stock, B or C or both might only receive part of what they request. Priority rules need to be set to divide the 10 shares in this situation. Cboe's periodic auction market assigns price priority over size priority over time priority; \cite{Budish} also mentions that price priority should be given over time priority. In general, time priority should be the last to consider. Therefore, we argue that our observation (a strategic trader lacks an incentive to join early) would still be valid given the presence of order imbalance and priority rules. However, learning how to set priority rules for order imbalance could be a meaningful future study.

\appendix
\section{Appendix: Numerical Methods}
\label{appendix:discrete}

\subsection{Problem of a Strategic Seller}

Recall the objective of a strategic seller is $V^\circ = \sup_{\tau} V^\circ(\tau)$. Fix $(t, n,\{p_i\}_{i=1}^{n})$, recall the price $P(t, n,\{p_i\}_{i=1}^{n})$ submitted by the strategic seller is a normal random variable.  \vspace{0.5em}

Denote $q(p,x, y)$ to be the joint density function of $P(t, n,\{p_i\}_{i=1}^{n})$, $P^{*}$,  $\sum\limits_{j: j =1, \tau_j \geq t}^{m } P_j$, where $\tau_j$ is the arrival time of the $j-th$ market maker with price $P_j$. Each of the three random variables follows a normal distribution. For simplicity, assume the three normal random variables are mutually independent. \vspace{0.5em}


Define $N_{(t,T)} :=  N_T - N_t$. Assume $N_{(t,T)}$ is independent of $P(t, n,\{p_i\}_{i=1}^{n})$ , $P^{*}$, $\{P_j\}_{\tau_j \geq t}$. Denote $f_{N_{(t,T)}}$ to be the probability density function of $N_{(t,T)}$. \vspace{0.5em}

Fix $N_t = n, \{P_i\}_{i=1}^{N_t} = \{p_i\}_{i=1}^n$, $\hat{\mu}(t, n,\{p_i\}_{i=1}^n)$ is the optimizer of :

\begin{align*}
    &\sup_{\mu}  \mathbb{E} \biggr[\mathbf 1_{P_{\mu} \leq  \frac{ \sum_{i=1}^{N_T} P_i}{N_T }  } \biggl\{ p_{\mu}^2 \frac{-K N_T}{(1+N_T)^2} + p_{\mu} \left[\frac{ K\sum_{i=1}^{N_T} P_i -  K N_T \sum_{i=1}^{N_T} P_i}{(1+N_T)^2} + \frac{P^{*}_T  K N_T }{(1 + N_T)} \right] + 	\frac{K (\sum_{i=1}^{N_T} P_i)^2}{(1 + N_T)^2}\\
    &\qquad - \frac{P^{*}K \sum_{i=1}^{N_T} P_i}{(1 + N_T)}\biggl\} \Big| N_{t} = n, (P_i)_{i=1}^{N_t} = (p_i)_{i=1}^{N_t}\biggr]\\
    = &\sup_{\mu} \scalebox{1}[1.5]{$\displaystyle\int$}  \mathbb{E}\biggr[\mathbf 1_{P_{\mu} \leq  \frac{ \sum_{i=1}^{N_T} P_i}{N_T }  } \biggl\{ p_{\mu}^2 \frac{-K N_T}{(1+N_T)^2} + p_{\mu} \left[\frac{ K\sum_{i=1}^{N_T} P_i -  K N_T \sum_{i=1}^{N_T} P_i}{(1+N_T)^2} + \frac{P^{*}_T  K N_T }{(1 + N_T)} \right] + \frac{K (\sum_{i=1}^{N_T} P_i)^2}{(1 + N_T)^2}\\
    &\qquad	- \frac{P^{*}K \sum_{i=1}^{N_T} P_i}{(1 + N_T)}		 \biggl\}\Big| N_{t} = n, (P_i)_{i=1}^{N_t} = (p_i)_{i=1}^{N_t}, N_{(t,T)} = m\biggr] dN_{(t,T)} (m)\\
    =& \sup_{\mu} \sum_{m=0}^{\infty} f_{N_{(t,T)}}(m)\biggl\{ \mathbb{E}\biggr[\mathbf 1_{P_{\mu} \leq  \frac{ \sum^{n} p_i+ \sum^{m } P_j}{n+m} }  \big\{-Kp_{\mu}^2\frac{n + m}{(n+ m +1)^2} +Kp_{\mu}\frac{	(\sum^{n} p_i + \sum^{m } P_j)(1- n- m )}{(n + m +1)^2} \\
    & \qquad  + K p_{\mu} P^{*}\frac{ n+ m}{(n +m +1)} +K\frac{(\sum^{n} p_i + \sum^{m} P_j)^2	}{(n + m +1)^2}  - KP_T^{*}\frac{	(\sum^{n} p_i + \sum^{m} P_j)	}{(n + m +1)} \big\}\biggr]  \biggl\} \\
    =& \sup_{\mu} \sum_{m=0}^{\infty} f_{N_{(t,T)}}(m)\biggl\{  \int_{-\infty}^{\infty}\int_{-\infty}^{\infty}\int_{-\infty}^{\infty} \mathbf 1_{P_{\mu} \leq  \frac{ \sum^{n} p_i+ y}{n+m} }  \big\{-Kp_{\mu}^2\frac{n + m}{(n+ m +1)^2} +Kp_{\mu}\frac{	(\sum^{n} p_i + y)(1- n- m )}{(n + m +1)^2}\\
    &\qquad  + K p_{\mu} x\frac{ n+ m}{(n +m +1)} +K\frac{(\sum^{n} p_i + y)^2	}{(n + m +1)^2}  - Kx\frac{	(\sum^{n} p_i + y)	}{(n + m +1)} \big\}q(p_{\mu},x,y) d(p_{\mu},x,y)  \biggl\}.
\end{align*}
where the second equality is due to independence.

\bigskip

\subsection{Appendix: Problem of the Regulator}

\begin{align*}
    \mathbb{E}\left[\left|P^{cl,t}_{T} - P^{*}\right|^2\right] &= \mathbb{E}\left[\bigg| \frac{\sum^{N_T} P_i +  1_{\left\{P_t \leq \frac{\sum^{N_T}P_i}{N_T} \right\}} P_t}{N_T +1_{ \left\{P_t \leq \frac{\sum^{N_T}P_i}{N_T}\right\} } }- P^{*}\bigg|^2\right]  \\
    &=\scalebox{1}[2]{$\displaystyle\int$} \mathbb{E}\left[\biggl( \frac{\sum^{N_T} P_i +  1_{\left\{P_t \leq \frac{\sum^{N_T}P_i}{N_T} \right\}} P_t}{N_T +1_{ \left\{P_t \leq \frac{\sum^{N_T}P_i}{N_T}\right\} } }- P^{*}\biggl)^2\Bigg|{N_{(t,T)}}\right] dN_{(t,T)}\\
    &=\scalebox{1}[2]{$\displaystyle\int$} \mathbb{E}\mathbb{E}\left[\biggl( \frac{\sum^{N_T} P_i +  1_{\left\{P_t \leq \frac{\sum^{N_T}P_i}{N_T} \right\}} P_t}{N_T +1_{ \left\{P_t \leq \frac{\sum^{N_T}P_i}{N_T}\right\} } }- P^{*}\biggl)^2\Bigg|\mathcal{F}_t, {N_{(t,T)}}\right] dN_{(t,T)}\\
\end{align*}


By Disintegration Theorem \cite{Kall}, 

\begin{align*}
    \mathbb{E}\left[\biggl( \frac{\sum\limits^{N_T} P_i +  1_{\left\{P_t \leq \frac{\sum^{N_T}P_i}{N_T} \right\}} P_t}{N_T +1_{ \left\{P_t \leq \frac{\sum^{N_T}P_i}{N_T}\right\} } }- P^{*}\biggl)^2\Bigg|\mathcal{F}_t,N_{(t,T)}\right] = \scalebox{1}[2]{$\displaystyle\int$}  \Bigg| \frac{\sum\limits^{N_t} P_i +y + 1_{\left\{p_t \leq \frac{\sum^{N_t}P_i+y }{{N_{(t,T)}}+N_t} \right\}} p}{{N_{(t,T)}}+N_t +1_{ \left\{p \leq \frac{\sum^{N_t}P_i+y }{{N_{(t,T)}}+N_t}\right\} } }- x\Bigg|^2	q(p,x,y) d(p,x,y),
\end{align*}
where $p$ refers to $P$, $x$ refers to $P^{*}$, $y$ refers to $\sum\limits^{N_{(t,T)}} P_j$.


\section{Appendix: Illustrate Remark \ref{remark:hatmu}}
\label{appendix:illustrate}

To better illustrate the proof, we draw a figure \ref{graph:proofillu} of $$y = -(KN_T)x + K\sum_{i=1}^{N_T}P_i - K N_TP^*.$$ For each point on the line, the $x$ coordinate would be $$x(p) = P_T^{cl}(p) - P^* = \frac{\sum^{N_T}P_i + p}{N_T +1} - P^*$$ and the $y$ coordinate would be $$y(p) = K(P_T^{cl}(p) - p) = K(\frac{\sum^{N_T}P_i + p}{N_T +1} - p),$$ where p is the price sent by the strategic seller. The shadow area is the trader's gain.

We have $\bar{\mu}(\omega)$ achieved at the middle point of the line segment and $P_{\bar{\mu}(\omega)}$ is normally distributed as $\mathcal{N}(\bar{\mu}(\omega),\sigma^2)$.

\begin{figure}[H]
    \centering
    \includegraphics[width=0.5\textwidth]{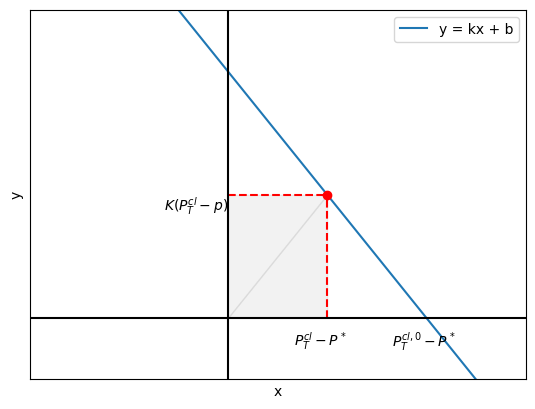}
    \caption{}
    \label{graph:proofillu}
\end{figure}

\bibliography{template} 

\end{document}